\documentclass[aoas,preprint]{imsart}

\RequirePackage[OT1]{fontenc}
\RequirePackage{amsthm,amsmath,amssymb}
\RequirePackage{natbib}
\RequirePackage[colorlinks,citecolor=blue,urlcolor=blue]{hyperref}
\bibliographystyle{imsart-nameyear}
\usepackage{subcaption} %
 \usepackage{epsfig}
 \usepackage{graphicx}	

\usepackage{algorithm} 
\usepackage{algpseudocode}
\usepackage{url}

\usepackage[table]{xcolor} % shade for cells in a table.

\usepackage{verbatim} % For comment environment

\usepackage[group-separator={,}]{siunitx} % for a thousand separator.  

\usepackage{ctable} % table. 
\usepackage{multirow}
\usepackage{booktabs,caption,fixltx2e}
\usepackage[flushleft]{threeparttable}
\tolerance=1000

\arxiv{arXiv:1211.6807}

\startlocaldefs
\numberwithin{equation}{section}
\theoremstyle{plain}
%\newtheorem{thm}{Theorem}[section]
  % end of proof
%\newenvironment{proof}{\par\noindent{\bf Proof\ }}{\hfill\BlackBox\\[2mm]}
\newtheorem{theorem}{Theorem}[section]
\newtheorem{lemma}[theorem]{Lemma} 
\newtheorem{proposition}[theorem]{Proposition} 
\newtheorem{corollary}[theorem]{Corollary}
\newtheorem{definition}[theorem]{Definition}
\endlocaldefs

\begin{document}

\begin{frontmatter}
\title{SCALABLE SPECTRAL ALGORITHMS FOR COMMUNITY 
DETECTION IN DIRECTED NETWORKS}
\runtitle{COMMUNITY 
DETECTION ALGORITHM IN DIRECTED NETWORKS}

\begin{aug}
\author{\fnms{Sungmin}
  \snm{Kim}\thanksref{t1}\ead[label=e1]{kim.2774@osu.edu}}
\and
\author{\fnms{Tao} \snm{Shi}\thanksref{t1}\ead[label=e2]{taoshi@stat.osu.edu}}

\thankstext{t1}{Partially supported by NSF grant (DMS-1007060 and DMS-130845)}
%\thankstext{t2}{First supporter of the project}
%\thankstext{t3}{Second supporter of the project}
\runauthor{S. Kim and T. Shi}

\affiliation{Department of Statistics, The Ohio State University}

\address{1958 Neil Avenue\\
Columbus, OH, 43210-1247, USA\\
\printead{e1}\\
\phantom{E-mail:\ }\printead*{e2}}

\end{aug}

\begin{abstract}
  Community detection has been one of the central problems in network
  studies and directed network is particularly challenging due to
  asymmetry among its links. In this paper, we found that
  incorporating the direction of links reveals new perspectives on
  communities regarding to two different roles, source and terminal,
  that a node plays in each community.  Intriguingly, such communities
  appear to be connected with unique spectral property of the graph
  Laplacian of the adjacency matrix and we exploit this connection by
  using regularized SVD methods.  We propose harvesting algorithms,
  coupled with regularized SVDs, that are linearly scalable for
  efficient identification of communities in huge directed networks.
  The proposed algorithm shows great performance and scalability on benchmark
  networks in simulations and successfully recovers communities in
  real network applications.
\end{abstract}

\begin{keyword}[class=MSC]
\kwd[Primary ]{62H30} % Classification and discrimination; cluster analysis
\kwd{91C20} % clustering
\kwd[; secondary ]{91D30} % social network
\kwd{94C15} %application to graph theory
\end{keyword}

\begin{keyword}
\kwd{Community extraction}
\kwd{Graph Laplacian}
\kwd{Regularized SVD}
\kwd{Scalable algorithm}
\kwd{Social networks}
\end{keyword}

\end{frontmatter}

\section{INTRODUCTION}
\label{sec:introduction}

Many real world problems can be effectively modeled as pairwise relationship 
in networks where nodes represent entities of interest and links
mimic the interactions or relationships between them. 
The study of networks, recently referred
to as {\it network science}, can provide insight into their structures and  
properties.
One particularly interesting problem in network studies is 
searching for important sub-networks which are called 
communities, modules or groups. 
A community in a network is typically characterized by a group 
of nodes that have more links connected within
the community than connected to other
nodes~\citep{Fortunato:2010p1496}.

In many practical applications, the networks
in study are directed in nature, such as the World Wide Web, tweeter's
follower-followee network, and citation networks. 
Compared with in-depth studies of community structures in undirected
networks~\citep{Danon:2005p2194, Fortunato:2010p1496,
  Coscia:2011p1986}, community detection in directed networks has not
been as fruitful.  We found that one particular reason is a
restrictive assumption on the community structure in a directed
network.  Many of previous studies assumed a member of a community has
balanced out-links and in-links connected to other members.
This symmetric assumption can be seriously violated in cases where a member may
only play one main role, source or terminal, in the community.

We give an example of such violation, Cora citation
network\footnote{\url{http://people.cs.umass.edu/~mccallum/data.html}
}, which presents bibliographic citations among papers in computer
science.  Citation networks have temporal restriction and it make them
difficult to form symmetric communities.
The left panal of Figure~\ref{fig:visual-intro} illustrates
communities detected by a popular Infomap algorithm~\citep{Rosvall:2009p2201}
which assumes symmetric communities.  We see that it only detects 
minuscule symmetric communities.  On the other hand, the right panel
of Figure~\ref{fig:visual-intro} reveals a completely different
community structure that turned out to show high correspondance to the
categories of the papers.  This result is obtained by relaxing the
symmetric assumption and allowing two different roles for a node in a
community. We defer our discussion on details of this example to Section~\ref{sec-6-1}.

\begin{figure}[htb]
\centering
\begin{subfigure}{0.35\textwidth}
\includegraphics[ width=1\linewidth]{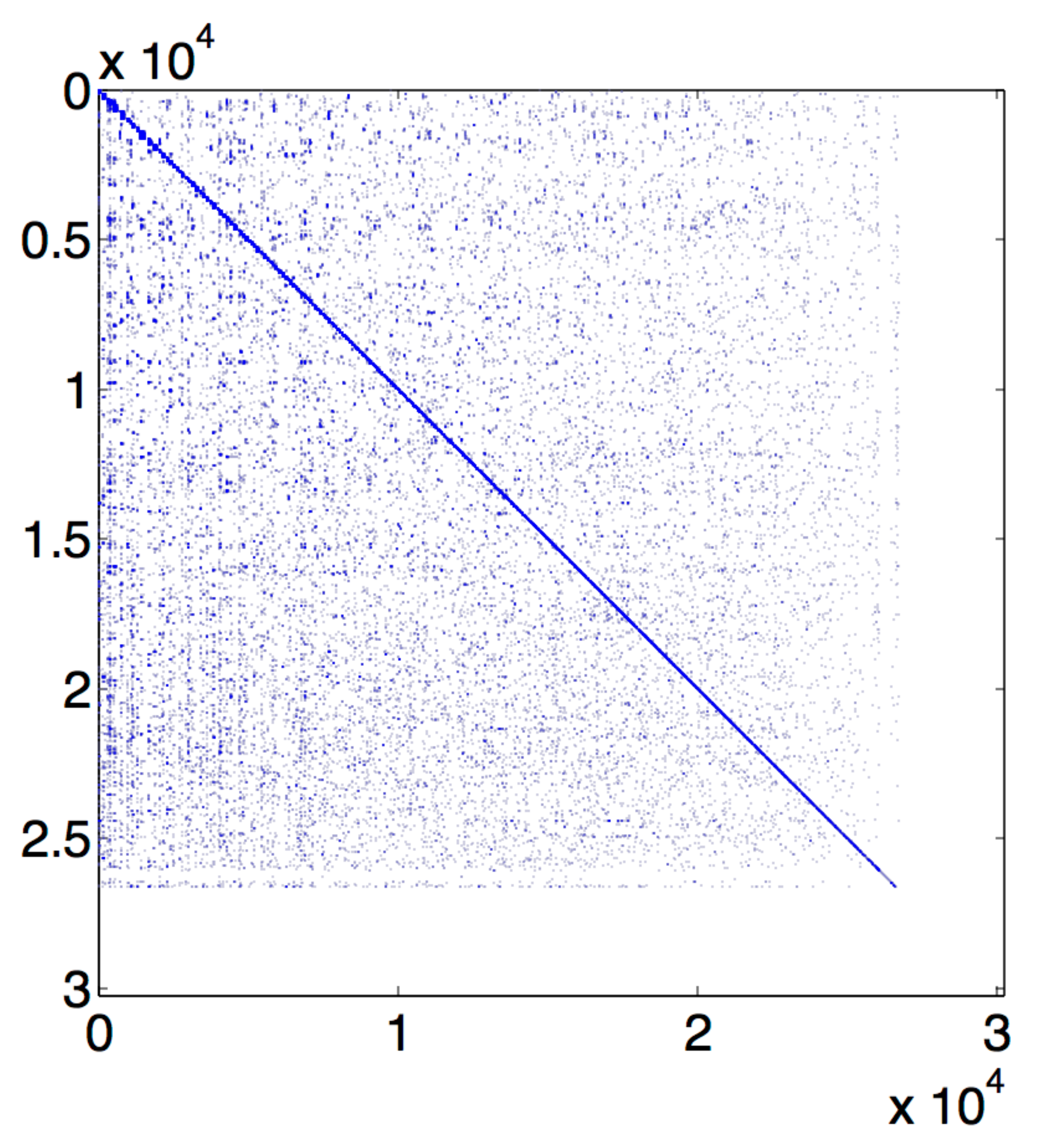}
  \caption{\label{fig:visual-info}Infomap }
\end{subfigure}%
\hspace{0.5in}
\begin{subfigure}{0.35\textwidth}
\includegraphics[ width=1\linewidth]{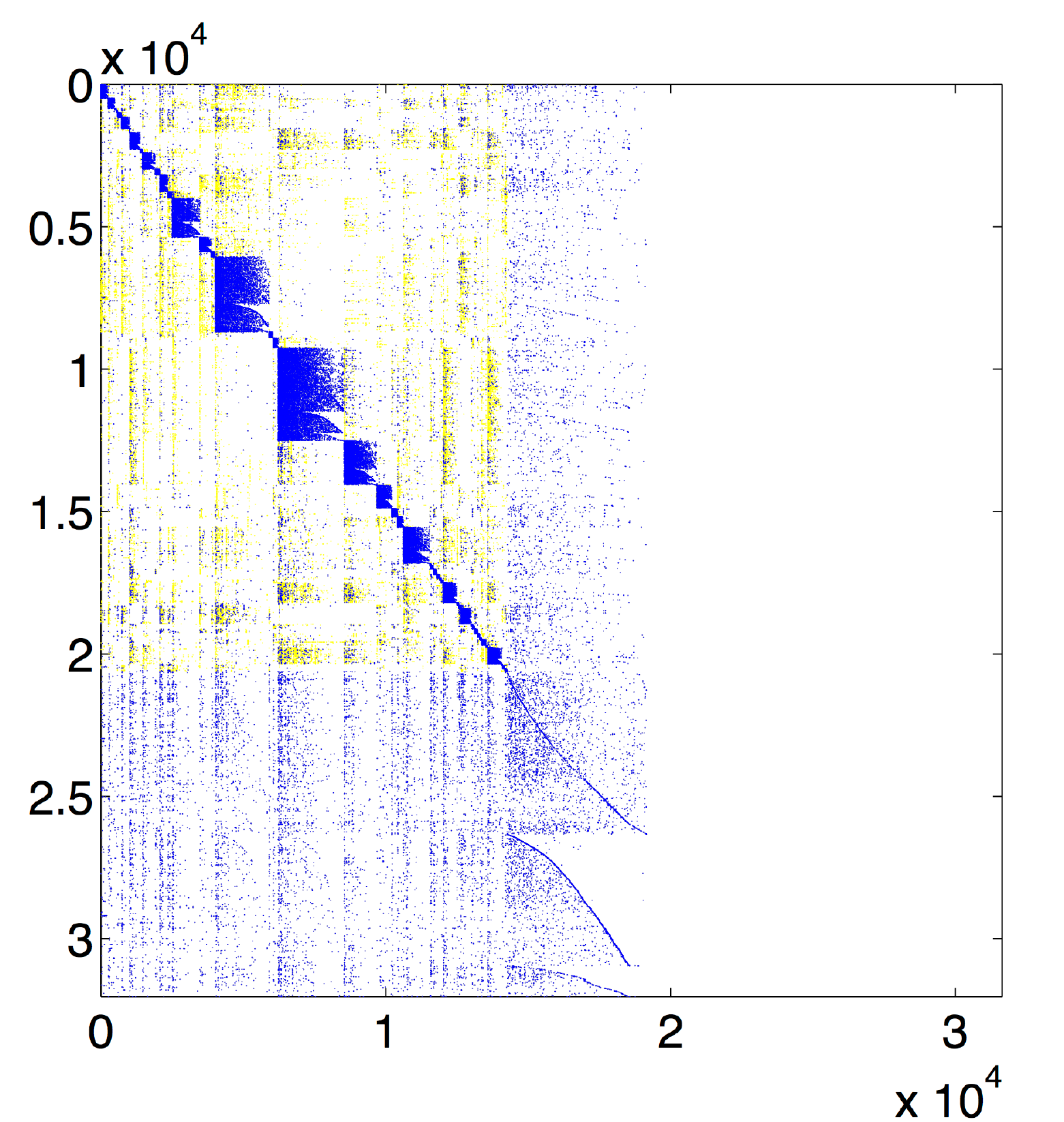}
  \caption{\label{fig:visual-L0}$L_0$-harvesting }
\end{subfigure}%
\caption{\label{fig:visual-intro} Comparison of symmetric communities
  detected by Infomap and asymmetric communities detected by the proposed algorithm
  in  Cora citation network. See
  Section~\ref{sec:comp-to-DI-SIM-and-Infomap} for more details.
}
\end{figure}

Asymmetric communities are common 
in directed networks where 
the direction implicitly express an asymmetric relationship among its nodes.
For example, social networks show celebrity-fan community structure and
celebrities hardly follow many fans.
 \cite{Satuluri:2011p1559} and
\cite{guimera2007module} also pointed out such asymmetric communities
in the context of World-Wide-Web, a Wikipedia network and investment
networks.

In this paper, we show that a community structure driven by 
two separate roles that a node plays in a directed network
can be formulated as a paired sets of
nodes.  We call such community, \textit{Directional Community}, which
is defined by a paired sets of nodes, a \textit{source node set} and a \textit{terminal
node set}.  We investigate notions of connectivity and quality measures
for a directional community. In those aspects, we propose algorithms
that is capable of detecting good directional communities.

Another aspect of a community detection algorithm we consider here is
scalability.  Huge networks raise two concerns,
computational complexity and computer memory requirements.  We exploit
regularized Singular Value Decomposition (SVD) and search local
communities in time proportional to the number of edges.

The remainder of this paper is organized as follows:
Section~\ref{sec:direct-netw-analys} introduces a new concept of
community for directed networks. Section~\ref{sec:regularizedsvd}
presents the regularized SVD algorithms developed for detecting the
communities. A simulation study is presented in Section~\ref{sec-5}. 
Section~\ref{sec-6} shows the results of the proposed algorithms in
two real-world networks. We finish with conclusions in
Section~\ref{sec-7}. 
Due to limitations of space, proofs of all theoretical results are
included in Appendix.

\section{COMMUNITY IN A DIRECTED NETWORK}
\label{sec:direct-netw-analys}

The nodes and links in a directed network are often presented by a
graph ${\cal G}=({\cal V},{\cal E})$, with ${\cal V} \equiv \{v_1,
\ldots, v_n\} $ and ${\cal E} \equiv \{e_{1} \ldots, e_{m}\}$ denoting
the vertex set and edge set respectively. For an existing edge $e$ in
the network, its source node and terminal node are denoted as $v^s(e)$
and $v^t(e)$ respectively.  
Let $W$ be the
$|{\cal V}| \times |{\cal V}|$ adjacency matrix in which $W(i,j) = 1$
indicates the existence of an edge originated from $v_i$ and pointed
to $v_j$ and $W(i,j) = 0$ otherwise.

In the literature of community detection in directed networks, several
authors have attempted to directly incorporate the directionality of edges
into their algorithms~\citep{capocci2005detecting,newman2007mixture,
  andersen2006communities,
  arenas2007size,Rosvall:2009p2201}.  
In particular, existing works pointed out the importance
of recognizing the dual roles, source and terminal of edges.
\citep{Zhou:2005,guimera2007module,benzi2012ranking}.

We consider a community structure that treats the dual roles
separately. 
A \textit{Directional Community} $C(S,T)$ is defined by two different sets
of nodes, a source node set $S \subset {\cal V}, S \neq \emptyset$, and
a terminal node set $T \subset {\cal V}, T \neq \emptyset$.  
Community structure is characterized by majority of edges placed within the
community (starting from the nodes in $S$ and ending at the nodes in
$T$).
In what follows, 
we first define a new type of connectivity between nodes in a directed
network. 
This newly defined connectivity 
leads to the concept of \textit{Directional Components},
which serve
as communities in the ideal situation in analogous to connected
components in an undirected network.
Furthermore, we consider a graph cut criterion that measures the quality
of a directional community.

\subsection{Directional Components}
\label{sec:directional-component}

We start with  
exploring connectivities of nodes in a directed network.
Two types of connectivity
have been studied in directed networks.
\emph{Weak connectivity} defines two 
nodes $s$ and $t$ ($s,t \in \cal{V}$) as weakly connected if 
they can reach each other through a path, 
regardless of the direction of edges in the path.
Meanwhile, \emph{Strong connectivity} follows the 
direction of edges in a path and calls nodes $s$ and $t$ 
strongly connected if the path \( (e_1, e_2, \ldots, e_l) \) 
also satisfies \(v^s(e_1) = s, v^t(e_l) = t,  v^t(e_k) = v^s(e_{k+1}),
k = 1, \dots, l-1 \). 
In this paper, we propose a new type of connectivity, 
\begin{definition}
Two nodes $s$ and $t$ ($s, t \in \cal{V})$ are {\textbf{D-connected}}, denoted by \( s \leadsto t \),
if there exists a path of edges
\( (e_1,  \dots, e_{2m-1}) \), $m \in \mathbb{R}^+$, satisfying $
v^s(e_1) = s, v^t(e_{2m-1}) = t $ and 
\begin{align}
\begin{cases}
 v^t(e_{2k-1}) = v^t(e_{2k}) &\mbox{(common terminal nodes)}\notag \\
 v^s(e_{2k}) = v^s(e_{2k+1}) &\mbox{(common source nodes)} \notag 
\end{cases}
\end{align}
for $k = 1, 2, \ldots, \max\{m-1,1\}$.
\end{definition}
D-connectivity follows the edges in alternating directions, first forward and then backward.
We call this sequence of edges a D-connected path.
Figure~\ref{fig:simple-example-1} provides an illustration of D-connectivity. 
For example, we observe that \( A \leadsto D \) through a sequence of edges \( (e_2, e_3, e_4) \) and 
 \( E \leadsto A \) through a sequence of edges \( (e_5, e_4, e_1) \). 

\begin{figure}[htb]
\centering
\includegraphics[width=0.75\textwidth]{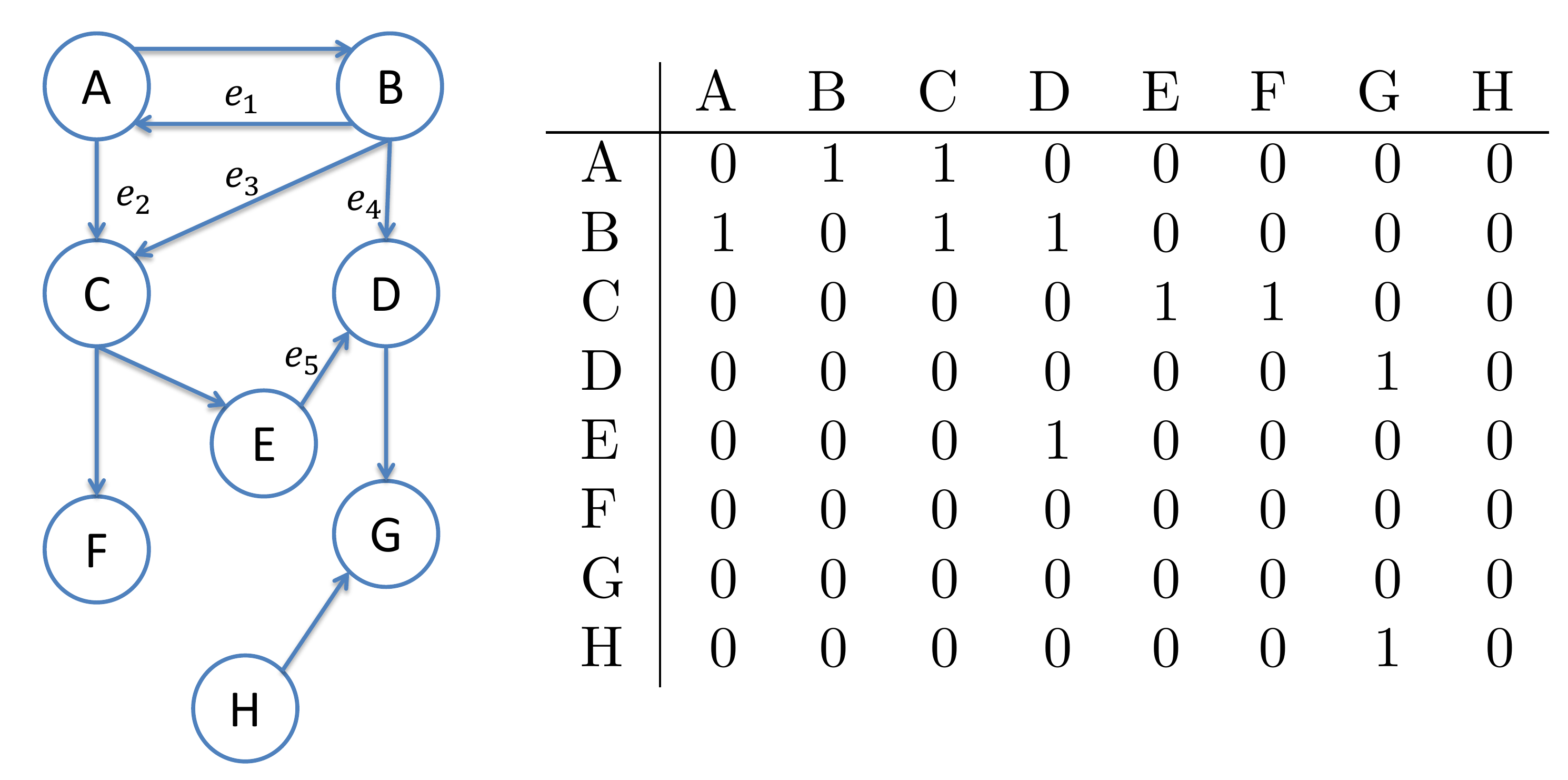}
\caption{\label{fig:simple-example-1} An example directed network and
  its adjacency matrix.}
\end{figure}

The definition of D-connectivity is a restricted version of a concept
called \emph{alternating connectivity} that was introduced by
\cite{Kleinberg:1999p1428} in the context of analyzing the centrality
of web-pages of World Wide Web using the HITS
algorithm.  The difference is that the alternating connectivity allows
two nodes be any pair on an alternating path regardless of
their roles.  Kleinberg also pointed out
the difficulty of developing the alternating connectivity to a concept
that characterizes a group of tightly connected nodes (a community),
because transitive relation does not hold in alternating connectivity.
However, D-connectivity bypasses this problem by recognizing the two
different roles, source and terminal.  Next we define a 
community structure, \emph{Directional Component}, based on the
D-connectivity.

\begin{definition} 
\label{def:dcomp}
A \textbf{Directional Component} $(DC)$ consists of a source node set
$S$ and a terminal node set $T$ $(S, T \subset \cal{V})$ and they are
the maximal subsets of nodes such that any pair of nodes \( (s,t) , s
\in S, t \in T \), are D-connected $( s \leadsto t )$. We call $S$
and $T$ the source part and terminal part of the directional component
and denote $DC \equiv (S, T)$.
\end{definition}
\noindent 
Directional components have desired properties as directional communities.
First, there is no edges between the source part of one component and
the terminal part of other components. 
Second, in a directed network that
contains multiple directional components $DC_1, DC_2, \ldots, DC_K$,
any node can belong to only one of the source
parts. In other words, the source parts $S_1, \ldots, S_K$ are
disjoint and the same holds for $T_1, \ldots, T_K$.  
Third, each edge belongs to one of directional components thus
they give a partition of edges. 

\begin{figure}[tb]
\centering
\includegraphics[width=.85\textwidth]{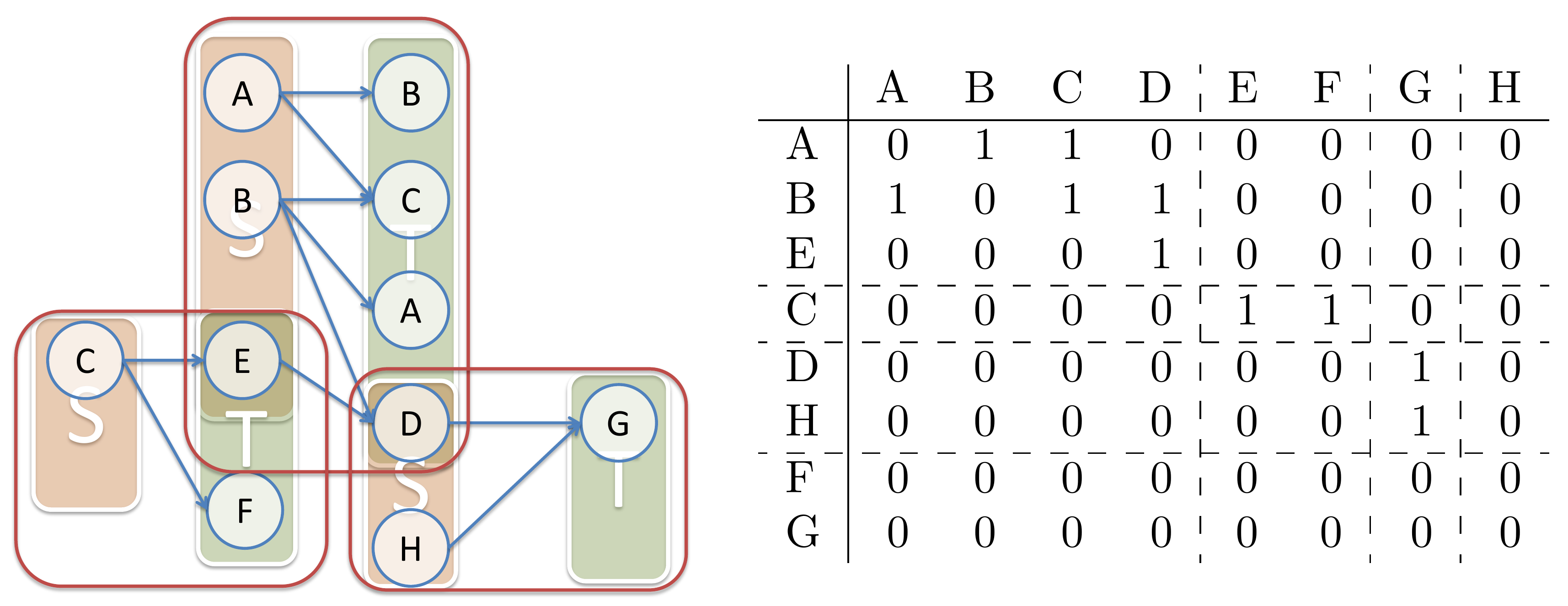}
\caption{\label{fig:simple-example-1d}The decomposition of 
the network in Figure~\ref{fig:simple-example-1} and rearranged adjacency
  matrix.}
\end{figure}

This two-way partition of nodes respects the asymmetric 
property of a directed network.
Figure~\ref{fig:simple-example-1d} illustrates the decomposition of 
the directed network shown in Figure~\ref{fig:simple-example-1}. 
Three directional components are found and the source and terminal parts of 
each directional component are displayed in boxes.
A node may have different memberships as source or terminal.
After reorganizing the nodes by directional components, 
there is no edges between the
source part and the terminal part of different directional components,
as shown in the right panel of Figure~\ref{fig:simple-example-1d}. 
This two-way partition of nodes results in a re-ordered 
adjacency matrix which exhibits block-wise structure.

A directional component may include the source part and the terminal
part that share few common nodes. This asymmetricity is possible
because the nodes in a D-connectivity path need to play only a single role,
source or terminal.
On the other hand, strong connectivity requires the nodes in a path,
except the first and the last node, to be source and terminal at the
same time.  Therefore, it is not surprising that many existing works
based on strong connectivity identify symmetric communities, for
example \cite{andersen2007local, Rosvall:2009p2201}.

Finding directional components 
can be achieved through a simple searching algorithm of computational
complexity \(O(|\cal{V}| + |\cal{E}|) \).  A directional component is
identified by iteratively adding nodes into the source part and the
terminal part (see Algorithm~\ref{al:directional-components} in 
Appendix~\ref{app-directional-components}).
We use this algorithm to decompose a directed network into directional
components prior to searching communities.

One drawback of searching for directional components is that real
networks usually have only one large directional component and
negligible small ones. This phenomenon is due to the fact that it is
unrealistic to expect absolutely no links between those communities.
In order to find more realistic communities, we first consider a
quality measure of directional community under the presence of a small
number of external edges.

\subsection{Directional Conductance}

We consider a graph cut criterion for
directional communities and define
directional cut between two directional communities $C_k(S_k,T_k),
C_l(S_l, T_l)$ as
\begin{equation}
  \label{eq:d-cut}
    \mbox{d-Cut}(C_k(S_k,T_k),C_l(S_l, T_l))  = \sum_{v_{i} \in S_k}
  \sum_{v_{j} \in T_l} W_{ij} + \sum_{v_{i} \in S_l} \sum_{v_{j}
  \in T_k} W_{ij}, 
\end{equation}
where $W$ is the adjacency matrix. 
Notice that two directional components have zero d-Cut. 

We want to emphasize the difference of d-Cut from a graph cut studied
in \cite{meila2007clustering}.  The graph cut criteria counts all
links between two communities while d-Cut only counts the links
starting from the source nodes of one community to the terminal nodes
of the other and vice versa.
Directional cut is equivalent to the graph cut criterion if $S_k=T_k, S_l = T_l$.

Based on d-Cut,  we propose a measure of the quality of a directional community,
\emph{Directional Conductance},
  \begin{equation}
       \label{eq:d-conductance}
       \phi(C(S,T))  = \frac{\mbox{d-Cut}(C(S,T), C(\bar{S},\bar{T})) }{\min\{\text{Vol}(S) +
         \text{Vol}(T),\text{Vol}(\bar{S}) +
         \text{Vol}(\bar{T}) \}}, \\
  \end{equation}
  where $\bar{S}, \bar{T}$ denotes the complement set of $S$ and
  the complement set of $T$, respectively.
$\mbox{Vol}(S) $ is defined as $\sum_{v_i \in S} d_{r,i}$, the sum of out-degrees of nodes in $S$ and
$\mbox{Vol}(T) $ is $\sum_{v_j \in T} d_{c,j}$, the sum of in-degrees of nodes in $T$.
The value of $\phi$ has a range from zero to one and a lower value
indicates relatively fewer external edges. 
Note that the value of $\phi$ for a directional component is zero. 
There are alternative normalizations that can be defined using d-Cut,
however, in this paper we concentrate on \eqref{eq:d-conductance}.

So far, we have explored the asymmetric roles of nodes in a directional
component.  The proposed D-connectivity preserves the roles along the alternating
paths and directional components divide a directed network into groups
according to the D-connectivity.  The distinction between the source part
and the terminal part of a directional community leads to the
directional conductance.
In the following sections, we develop scalable algorithms that
identify directional communities under the consideration of 
the connectivity and the conductance.

\section{REGULARIZED SVD ALGORITHMS FOR COMMUNITY EXTRACTION}
\label{sec:regularizedsvd}

\cite{KarlRohe:2012p1560} proposed a DI-SIM algorithm that uses the
low-rank approximation via SVD for bi-clustering (co-clustering) of
nodes in a directed network.
They investigated the spectrum of graph Laplacian of the adjacency matrix 
$W$. 
The graph Laplacian $Q$ is defined as  
\begin{equation} \label{def_laplacian}
Q =D_r^{-\frac{1}{2}} W D_c^{-\frac{1}{2}},
\end{equation}
where \(D_r \) is the diagonal matrix of out-degrees $\{d_{r,i}\}_{i =
  1, \ldots, n}$, and \(D_c\) is the diagonal matrix of in-degrees
$\{d_{c,j}\}_{j = 1, \ldots, n}$ \footnote{We define $\frac{0}{0} = 0$
  for convenience.}.  As a remark, the graph Laplacian $Q$ here is
different from the graph Laplacian considered in
\cite{chung2005laplacians,boley2011commute}, which is based on the
strong connectivity of nodes.  Assuming a known number of communities,
the DI-SIM algorithm clusters nodes in two different ways by running the k-means
algorithm on the leading left singular vectors and right singular
vectors separately.  They showed that the DI-SIM algorithm may recover
stochastically equivalent sender-nodes and receiver-nodes under
Stochastic Co-Block model, which is a relaxed version of Stochastic
Block model of \cite{holland1983stochastic}.

\subsection{Regularized SVD with $L_0$ Penalty}
\label{sec:regularized-svd-with}

In spite of DI-SIM algorithm's solid theoretical basis, there are
several limitations for our purpose on discovering directional
communities in a huge directed network.
First, it clusters nodes in two different clusters, but does not
provide paired source nodes and terminal nodes. 
Second, it requires a pre-specified number of communities, which is unknown
in most of applications.
Third, the spectral clustering is not scalable and not easy to be
parallelized.  
Huge networks frequently have many small
communities \citep{Leskovec:2008p2197} and it is challenging 
to recover all those communities at once.

In response to these limitations, we consider local-searching
algorithms to identify one community at a time rather than attempting
to discover all communities by the division of nodes. 
We propose a rank one regularized SVD by imposing $L_0$ penalty on
vectors $\mathbf{u}$ and $\mathbf{v}$ as follows, 
  \begin{equation}
      \label{eq:ideal}
      \max_{\mathbf{u} ,\mathbf{v} } \, \mathbf{u}^tQ \mathbf{v}  -\eta(\|\mathbf{u}\|_0 +
      \omega \|\mathbf{v}\|_0), \qquad \|\mathbf{u} \|_2 \leq 1,  \quad \|\mathbf{v}\|_2 \leq 1,
   \end{equation}
  where $\eta > 0$ is a penalty parameter and $\omega > 0$ determines
  the balance between the source part and terminal part. 
  \noindent The solution $(\mathbf{u}, \mathbf{v})$ from (\ref{eq:ideal}) leads to a community 
  $C(S, T)$ with $S = \{v: \mathbf{u}(v) \neq 0 \}$ and $T = \{v:
  \mathbf{v}(v) \neq 0\}$.

  Regularized SVD algorithms have been applied for bi-clustering
  tasks.  \cite{Lee:2010p1493,Witten:2009,Yang:2011p1492} showed how
  regularized SVDs cluster observations and features simultaneously.
  Results of bi-clusing typically show ``block-wise structure''.
  Such structure can be also found in the adjacency matrix of a
  directed network that has strong directional communities.

  We found this regularized SVD approach finds good directional
  communities in simulations and applications. The reasons are
  investigated in two perspectives. First, the regularized SVD problem
  is an approximation to minimizing directional
  conductance~\eqref{eq:d-conductance} with a penalty on the size of a
  community. Second, its solution leads to D-connected community.

\subsubsection{Approximately Minimize Penalized Directional Conductance}
\label{sec:regularized-svd-with-1}

Minimization of directional conductance over all possible directional
communities has two major limitations.  First, minimization of
conductance usually results in a balanced division of a
graph~\citep{Kannan:2004p1564} and recursive division of sub-graphs is
expensive for large networks.  Second, finding global minimization of
the criterion is NP-hard like the case in undirected networks.
Regarding the first limitation, we penalize the size of communities in
addition to the conductance. For the second limitation, we consider 
a spectral relaxation method to obtain
approximate solutions.

  First, we define the size of a community $C(S, T)$ as
  \begin{equation}
  \label{def:size}
  SZ_{\omega}(C(S,T)) \equiv |S|+ \omega |T|,
  \end{equation}
  where the constant $\omega>0$ balances the sizes of $S$ and $T$.
  Let us consider a quality measure of a directional community,
  \begin{equation}
       \label{eq:penalized}
       \phi_{\eta}(C(S,T)) = \frac{\mbox{d-Cut}(C(S,T), C(\bar{S},\bar{T}))}{\mbox{Vol}({S}) +
         \mbox{Vol}({T}) }  + 2 \eta {} SZ_{\omega}(C(S,T)),
  \end{equation}
  where $\eta > 0$ is a parameter determining the trade-off between
  conductance and the size of community.  $\phi_{\eta}$ penalizes
  large communities and prefers small communities having relatively
  low conductance.

  Now, we show that the regularized SVD problem \eqref{eq:ideal} is an
  approximation to the minimization of \eqref{eq:penalized}.  First, we
  introduce a proposition that reformulates $\phi_{\eta=0}(C(S,T))$.

  \begin{proposition}
  \label{prop:phi}
    Given a directional community $C(S,T)$, define two membership
    vectors
    $\mathbf{u}, \mathbf{v} \in \mathbb{R}^{n} $, 
  \begin{align}
     \label{eq:Ncut2}
     \mathbf{u}(v_{i}) & =
     \begin{cases}
     \frac{\sqrt{d_{r,i}}}{\sqrt{\text{Vol}(S) + \text{Vol}(T)}}, & v_i \in S \\
     0, & v_i \in \bar{S}, 
     \end{cases} 
      &
     \mathbf{v}(v_{j}) & =
     \begin{cases}
     \frac{\sqrt{d_{c,j}}}{\sqrt{\text{Vol}(S) + \text{Vol}(T)}}, & v_j \in T  \\
     0, & v_j \in \bar{T},
     \end{cases} 
  \end{align}
  then the following equations hold,
   $\phi_{\eta=0}(S,T)  = 1 - 2 \mathbf{u}^t Q \mathbf{v}$
and $\|\mathbf{u}\|_{2}^2 + \|\mathbf{v}\|_{2}^2 = 1$. 
  \end{proposition}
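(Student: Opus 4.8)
The plan is to verify the two claimed identities by direct computation, exploiting the way the degree normalizations built into $Q$ and into the membership vectors $\mathbf{u},\mathbf{v}$ cancel against one another. Throughout I will write $W(A,B) := \sum_{v_i \in A}\sum_{v_j \in B} W_{ij}$ for the number of edges running from a set $A$ to a set $B$, so that in particular $W(S,T)$ counts the edges internal to the community in the source-to-terminal sense.

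First I would establish the norm identity, which is the easier of the two. Since $\mathbf{u}$ is supported on $S$ with $\mathbf{u}(v_i)^2 = d_{r,i}/(\text{Vol}(S)+\text{Vol}(T))$, summing over $v_i \in S$ and using $\text{Vol}(S) = \sum_{v_i \in S} d_{r,i}$ gives $\|\mathbf{u}\|_2^2 = \text{Vol}(S)/(\text{Vol}(S)+\text{Vol}(T))$. The analogous computation for $\mathbf{v}$, supported on $T$ with in-degree weights, yields $\|\mathbf{v}\|_2^2 = \text{Vol}(T)/(\text{Vol}(S)+\text{Vol}(T))$, and adding the two gives $\|\mathbf{u}\|_2^2 + \|\mathbf{v}\|_2^2 = 1$.

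Next I would compute the bilinear form. From $Q_{ij} = W_{ij}/(\sqrt{d_{r,i}}\sqrt{d_{c,j}})$ and the fact that $\mathbf{u},\mathbf{v}$ are supported on $S$ and $T$, we have $\mathbf{u}^t Q \mathbf{v} = \sum_{v_i \in S}\sum_{v_j \in T} \mathbf{u}(v_i)\,Q_{ij}\,\mathbf{v}(v_j)$. The factors $\sqrt{d_{r,i}}$ and $\sqrt{d_{c,j}}$ in the numerators of $\mathbf{u}(v_i)$ and $\mathbf{v}(v_j)$ cancel exactly against those appearing in $Q_{ij}$, leaving $\mathbf{u}^t Q \mathbf{v} = W(S,T)/(\text{Vol}(S)+\text{Vol}(T))$. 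This is the step where the specific degree-weighting of the membership vectors is designed precisely to produce the clean quotient.

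The remaining step, which supplies the key combinatorial identity, is to connect this quotient to the conductance. I would expand the numerator of $\phi_{\eta=0}$ as $\mbox{d-Cut}(C(S,T),C(\bar{S},\bar{T})) = W(S,\bar{T}) + W(\bar{S},T)$ from the definition of d-Cut. The crucial observation is that the volumes decompose as edge sums: because $\text{Vol}(S)$ is the total out-degree of $S$, it counts every edge leaving $S$, and each such edge lands in either $T$ or $\bar{T}$, so $\text{Vol}(S) = W(S,T) + W(S,\bar{T})$; symmetrically, since $\text{Vol}(T)$ is the total in-degree of $T$, we get $\text{Vol}(T) = W(S,T) + W(\bar{S},T)$. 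Adding these gives $\text{Vol}(S)+\text{Vol}(T) = 2W(S,T) + W(S,\bar{T}) + W(\bar{S},T)$, whence the d-Cut equals $\text{Vol}(S)+\text{Vol}(T) - 2W(S,T)$. Dividing by $\text{Vol}(S)+\text{Vol}(T)$ and substituting $\mathbf{u}^t Q \mathbf{v} = W(S,T)/(\text{Vol}(S)+\text{Vol}(T))$ then yields $\phi_{\eta=0}(S,T) = 1 - 2\mathbf{u}^t Q \mathbf{v}$. The only point demanding care is bookkeeping rather than genuine difficulty: one must track that $\text{Vol}(S)$ uses \emph{out}-degrees while $\text{Vol}(T)$ uses \emph{in}-degrees, so that they are exactly the quantities partitioning into the internal term $W(S,T)$ and the two cut terms. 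The convention $0/0 = 0$ disposes of any degree-zero nodes without incident.
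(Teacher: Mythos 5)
Your proof is correct, and it takes a genuinely different route from the paper's. The paper introduces the weighted sum of squares $\sum_{i,j} W_{ij}\left(u_i/\sqrt{d_{r,i}} - v_j/\sqrt{d_{c,j}}\right)^2$ and evaluates it in two ways: term by term, the difference vanishes on $S\times T$ and on $\bar{S}\times\bar{T}$, leaving exactly $\mbox{d-Cut}(C(S,T),C(\bar{S},\bar{T}))/(\text{Vol}(S)+\text{Vol}(T))$; expanding the square and using the row- and column-sum identities $\sum_j W_{ij}=d_{r,i}$ and $\sum_i W_{ij}=d_{c,j}$ gives $\|\mathbf{u}\|_2^2+\|\mathbf{v}\|_2^2-2\mathbf{u}^tQ\mathbf{v}=1-2\mathbf{u}^tQ\mathbf{v}$. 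This mirrors the classical normalized-cut computation the paper cites from von Luxburg. You bypass that quadratic form entirely: you compute $\mathbf{u}^tQ\mathbf{v}=W(S,T)/(\text{Vol}(S)+\text{Vol}(T))$ by direct cancellation of the degree factors, and then prove the combinatorial identity $\text{Vol}(S)+\text{Vol}(T)=2W(S,T)+\mbox{d-Cut}$ by splitting the out-edges of $S$ and the in-edges of $T$ into internal and cut edges. Both arguments rest on the same two facts (row sums of $W$ are out-degrees, column sums are in-degrees), but yours makes the combinatorial content explicit, namely that $2\mathbf{u}^tQ\mathbf{v}$ is precisely the internal-edge mass relative to total volume, whereas the paper's sum-of-squares route exhibits the conductance as a nonnegative Laplacian-type energy, which is the form that carries over verbatim from the undirected spectral-relaxation argument and shows $\mathbf{u}^tQ\mathbf{v}\le 1/2$ without reference to the cut. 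Both proofs are complete; your bookkeeping of out-degrees for $\text{Vol}(S)$ versus in-degrees for $\text{Vol}(T)$ is exactly what makes the volume decomposition valid.
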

  This proposition can be proved by a standard result in
  graph cut theory that can be found in \cite{VonLuxburg:2007p47}. 
  The penalty on the size of community can be represented by
  $L_0$ penalty on $\mathbf{u}, \mathbf{v}$, since
  \begin{equation}
  SZ_{\omega}(C(S,T)) = \|\mathbf{u}\|_{0} + \omega \|\mathbf{v}\|_{0}.
  \end{equation}
Then, \eqref{eq:ideal} is obtained by the spectral relaxation that drops the discrete membership vector condition of
  $\mathbf{u},\mathbf{v}$ in \eqref{eq:Ncut2} and replacing $\|\mathbf{u}\|_{2}^2 +
  \|\mathbf{v}\|_{2}^2 = 1$ by $\|\mathbf{u}\|_{2} \leq 1, 
  \|\mathbf{v}\|_{2} \leq 1$. 

  Interestingly, we see that the penalty on the size of a
  community is actually a sparsity-inducing penalty on
  $\mathbf{u}, \mathbf{v}$.  Another interpretation of
  \eqref{eq:ideal} is a spectral relaxation of minimizing conductance
  with a sparsity inducing penalty.  It helps to
  recover the original sparse form of membership 
  vectors.

\subsubsection{Maintaining Directional Connectivity}
\label{sec:maint-direct-conn}

We introduced directional components in Definition~\ref{def:dcomp} and
showed they lead to block-wise structure of the adjacency matrix. 
For an undirected network, there is a well known relationship between
the spectrum of graph Laplacian and its connected
components~\citep{VonLuxburg:2007p47}: the
multiplicity of the largest eigenvalue (one) of Laplacian is the
same as the number of connected components in the network.  This
relationship can be extended to directed networks and directional
components.

For a subset of nodes, $A \subset {\cal V}$,  in a network with 
$n$ nodes,  we define $ \mathbf{1}_{A} $ 
as an indicator vector of length $n$ with each element $ \mathbf{1}_{A
  (i)} = I(v_i \in A) $. 
Recall that $S_k$ and $T_k$ represent the source part and terminal part of the 
$k$-th directional component respectively.
$Q(C(S,T))$ denotes a matrix obtained by replacing to zeros the rows and columns of $Q$ that are not in $S$ and $T$ respectively. We denote the principal singular value of a matrix $X$ by $\sigma_1(X)$. 

\begin{proposition}
\label{prop:d-comp} 
For a directed network, $\sigma_1(Q)$
is {\bf one} and its 
multiplicity, $K$, is equal to the number of directional components in the network.  
In addition, the principal left $($or right$)$ singular vector space is spanned by 
$ \{D_r^{\frac{1}{2}} \mathbf{1}_{S_{1}}, \ldots, D_r^{\frac{1}{2}} \mathbf{1}_{S_{K}} \}$ $($or 
$\{D_c^{\frac{1}{2}} \mathbf{1}_{T_{1}}, \ldots, D_c^{\frac{1}{2}}
\mathbf{1}_{T_{K}} \})$.
\end{proposition}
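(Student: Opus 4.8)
The plan is to reduce the statement to the classical spectral characterization of connected components of an undirected graph, applied to a bipartite graph built from the directed network. Concretely, I would introduce the node-splitting bipartite graph $\mathcal{B}$ on the $2n$ vertices $\{s_1,\dots,s_n\}\cup\{t_1,\dots,t_n\}$, placing an undirected edge between $s_i$ and $t_j$ exactly when $W(i,j)=1$. The degree of $s_i$ in $\mathcal{B}$ is then the out-degree $d_{r,i}$ and the degree of $t_j$ is the in-degree $d_{c,j}$, so the adjacency and degree matrices of $\mathcal{B}$ are $A_{\mathcal{B}}=\begin{pmatrix}0 & W\\ W^t & 0\end{pmatrix}$ and $D_{\mathcal{B}}=\begin{pmatrix}D_r & 0\\ 0 & D_c\end{pmatrix}$.

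The key computation is that the normalized adjacency matrix of $\mathcal{B}$ is $M := D_{\mathcal{B}}^{-1/2}A_{\mathcal{B}}D_{\mathcal{B}}^{-1/2} = \begin{pmatrix}0 & Q\\ Q^t & 0\end{pmatrix}$, with the $0/0=0$ convention absorbing isolated split-vertices (nodes of zero out- or in-degree, which become zero rows or columns of $Q$). Since $M$ is symmetric, its eigenvalues are $\pm\sigma_i(Q)$, and an eigenvector of $M$ for eigenvalue $+1$ has the block form $(\mathbf{a},\mathbf{b})$ with $Q\mathbf{b}=\mathbf{a}$ and $Q^t\mathbf{a}=\mathbf{b}$; that is, $\mathbf{a}$ is a left and $\mathbf{b}$ a right singular vector of $Q$ for singular value $1$. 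Hence the multiplicity of the singular value $1$ of $Q$ equals the multiplicity of eigenvalue $1$ of $M$, while the left and right principal singular spaces are the images of this eigenspace under projection onto the first and second blocks. That $\sigma_1(Q)=1$ rather than larger follows because $M$ is similar to the row-stochastic random-walk matrix $D_{\mathcal{B}}^{-1}A_{\mathcal{B}}$, whose spectral radius is $1$; alternatively one can show $\mathbf{u}^tQ\mathbf{v}\le\tfrac12(\|\mathbf{u}\|_2^2+\|\mathbf{v}\|_2^2)\le 1$ directly by applying $2ab\le a^2+b^2$ to each term of $\sum_{ij}W_{ij}\,\tfrac{u_i}{\sqrt{d_{r,i}}}\,\tfrac{v_j}{\sqrt{d_{c,j}}}$.

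It then remains to identify the directional components with the connected components of $\mathcal{B}$. I would verify that $s\leadsto t$ holds in the directed network if and only if the source copy of $s$ and the terminal copy of $t$ lie in the same connected component of $\mathcal{B}$: tracing a D-connected path $(e_1,\dots,e_{2m-1})$, the common-terminal and common-source conditions say precisely that its edges, read as bipartite edges $\{s_{v^s(e)},t_{v^t(e)}\}$, chain up into a walk in $\mathcal{B}$ from the source copy of $s$ to the terminal copy of $t$; conversely any walk in $\mathcal{B}$ alternates between source and terminal copies and reassembles into a D-connected path. Consequently a directional component $(S_k,T_k)$ is exactly the set of source copies and terminal copies contained in one edge-bearing connected component of $\mathcal{B}$, so the number of directional components equals the number of such components. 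Invoking the standard undirected result (e.g.\ \cite{VonLuxburg:2007p47}) that eigenvalue $1$ of a normalized adjacency matrix has multiplicity equal to the number of nontrivial connected components, with eigenvectors $D_{\mathcal{B}}^{1/2}\mathbf{1}_{\mathrm{comp}_k}=(D_r^{1/2}\mathbf{1}_{S_k},\,D_c^{1/2}\mathbf{1}_{T_k})$, yields both the multiplicity $K$ and the claimed spanning sets once each eigenvector is split into its two blocks.

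The main obstacle I anticipate is the bookkeeping in the equivalence between D-connectivity and bipartite connectivity, together with the degenerate cases forced by the $0/0=0$ convention: I must confirm that a component of $\mathcal{B}$ consisting of a single isolated split-vertex (a zero out- or in-degree node) contributes singular value $0$ rather than $1$ and so is correctly excluded, and that every edge-bearing component yields nonempty $S_k$ and $T_k$, so that the maximality clause of Definition~\ref{def:dcomp} matches the maximality of connected components. A secondary point needing care is that $\mathcal{B}$ is bipartite, so $M$ also carries the eigenvalue $-1$; this is harmless here since singular values are nonnegative and only the $+1$ eigenspace contributes to $\sigma_1(Q)=1$.
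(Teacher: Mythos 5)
Your proposal takes essentially the same route as the paper's own proof: both convert the directed graph to a bipartite graph whose normalized adjacency matrix is $\bigl(\begin{smallmatrix} 0 & Q \\ Q^t & 0 \end{smallmatrix}\bigr)$, identify directional components with the edge-bearing connected components of that bipartite graph (checking connectivity and maximality), invoke the Von Luxburg spectral characterization of connected components, and split each eigenvector of the block matrix into its two halves to obtain the left and right singular vectors. The only differences are cosmetic bookkeeping: the paper strips zero out-/in-degree rows and columns before forming the bipartite graph where you retain them and exclude isolated split-vertices via the $0/0=0$ convention, and your stochastic-similarity/AM--GM bound for $\sigma_1(Q)\le 1$ replaces the paper's appeal to the positive semi-definiteness of $L_{sym}=I-Q_A$.
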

\noindent This proposition informs that a directional component is indeed a
solution of \eqref{eq:ideal} when $\eta = 0$. Moreover, when a network has only
one directional component, sufficiently large $\eta$ allows us to find
a D-connected subnetwork embedded in the directional component, as
stated in the following theorem.

\begin{theorem}
\label{thm:connectivity}
For any $\eta > 0$, the directional community derived from a solution of \eqref{eq:ideal} is a D-connected subgraph of a directional component.  Furthermore, the solution is a strict subgraph of a directional component if and only if the penalty parameter $\eta$ is greater than 
  \begin{equation}
   \min_{S,T}\frac{1-\sigma_{1}(Q(C(S,T)))}{SZ_{\omega}(DC_1) -
      SZ_{\omega}(C(S,T))}  \quad \text{subject to } SZ_{\omega}(C(S,T)) < SZ_{\omega}(DC_1),
  \end{equation}
where $DC_{1}$ denotes the smallest directional component. 
\end{theorem}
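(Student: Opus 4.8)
The plan is to convert the continuous program \eqref{eq:ideal} into a purely combinatorial optimization over node sets and then analyze it through the block structure supplied by Proposition~\ref{prop:d-comp}. Fix candidate supports $S=\{v:\mathbf{u}(v)\neq 0\}$ and $T=\{v:\mathbf{v}(v)\neq 0\}$. Since $\mathbf{u}^tQ\mathbf{v}=\mathbf{u}^tQ(C(S,T))\mathbf{v}$ whenever $\mathbf{u},\mathbf{v}$ are supported on $S,T$, maximizing the bilinear form over $\|\mathbf{u}\|_2\le 1,\ \|\mathbf{v}\|_2\le 1$ with these supports returns exactly $\sigma_1(Q(C(S,T)))$. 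Hence \eqref{eq:ideal} is equivalent to
\[
\max_{S,T}\ \sigma_1(Q(C(S,T))) - \eta\, SZ_{\omega}(C(S,T)),
\]
and it remains to understand its maximizers; I will write $g(S,T)$ for this objective.

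For the first claim I would show that any maximizer is irreducible in a bipartite sense. Ordering $S$ against $T$, the $(i,i')$ entry of $Q(C(S,T))Q(C(S,T))^t$ is nonzero precisely when $v_i,v_{i'}\in S$ share a common terminal node in $T$, while the $(j,j')$ entry of $Q(C(S,T))^tQ(C(S,T))$ is nonzero precisely when $v_j,v_{j'}\in T$ share a common source in $S$; alternating these two relations is exactly a D-connected path, so connectivity of the bipartite graph on $(S,T)$ induced by the nonzero entries of $Q(C(S,T))$ coincides with every pair $(s,t)$ being D-connected ($s\leadsto t$). If this bipartite graph is disconnected --- which also covers the case where $S,T$ straddle two directional components, where Definition~\ref{def:dcomp} forces the cross blocks to vanish, and the case of a zero row or column --- then after permutation $Q(C(S,T))$ is block diagonal and $\sigma_1$ equals the largest block's singular value. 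Restricting $S,T$ to that block (after discarding zero rows and columns) leaves $\sigma_1$ unchanged but strictly lowers $SZ_{\omega}$ because $\eta>0$ and $\omega>0$, so $g$ strictly increases, contradicting optimality. Hence a maximizer is D-connected and confined to a single directional component.

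For the second claim I would compare the best strict subgraph against the containing component. By Proposition~\ref{prop:d-comp}, $\sigma_1(Q)=1$ and its left maximizers are spanned by the vectors $D_r^{1/2}\mathbf{1}_{S_k}$, each of full support on its component; since $\mathbf{u}^tQ\mathbf{v}=1$ forces $\mathbf{u},\mathbf{v}$ to be top singular vectors, no unit vector supported on a proper subset of a single $S_k$ can attain the value one. This yields the strict bound $\sigma_1(Q(C(S,T)))<1$ for every strict subgraph, whereas $\sigma_1(Q(C(S_1,T_1)))=1$ for the full smallest component $DC_1$. Consequently $g(DC_1)=1-\eta\,SZ_{\omega}(DC_1)$, and among full components this is largest for the smallest, so $DC_1$ is the relevant benchmark.

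Since $\sigma_1(Q(C(S,T)))\le\sigma_1(Q)=1$ for every community, any strict community with $SZ_{\omega}(C(S,T))\ge SZ_{\omega}(DC_1)$ satisfies $g(S,T)\le 1-\eta\,SZ_{\omega}(C(S,T))\le g(DC_1)$ and cannot be optimal; therefore the solution is a strict subgraph if and only if some community with $SZ_{\omega}(C(S,T))<SZ_{\omega}(DC_1)$ beats $DC_1$, i.e.
\[
\eta\bigl(SZ_{\omega}(DC_1)-SZ_{\omega}(C(S,T))\bigr) > 1-\sigma_1(Q(C(S,T))).
\]
On this constrained set the bracket is positive, so dividing and taking the minimum over such communities reproduces exactly the threshold in the statement. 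The main obstacle I anticipate is the strict inequality $\sigma_1(Q(C(S,T)))<1$ for strict subgraphs: establishing that the value one is attained only with full support on a directional component is where the Perron--Frobenius structure packaged in Proposition~\ref{prop:d-comp} is essential, and it is what simultaneously makes the numerator positive and turns the characterization into an ``if and only if'' rather than a one-sided bound. A secondary care point is the reduction to $g$, in particular that zero rows and columns of $Q(C(S,T))$ may be discarded without loss, which the block-diagonal argument relies on.
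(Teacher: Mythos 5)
Your proposal is correct and takes essentially the same route as the paper: the same reduction of \eqref{eq:ideal} to $\max_{S,T}\,\sigma_1(Q(C(S,T)))-\eta\,SZ_{\omega}(C(S,T))$, the same block-decomposition contradiction establishing D-connectedness, and the same comparison of strict subcommunities against $DC_1$ to obtain the threshold on $\eta$. The only cosmetic difference is that you prove the key fact---$\sigma_1(Q(C(S,T)))\le 1$ with equality exactly when a full directional-component block is included---inline from the singular-space characterization in Proposition~\ref{prop:d-comp}, whereas the paper isolates this as Corollary~\ref{cor:submatrix} and proves it with a selection-matrix argument.
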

\noindent 
Combined with the relationship between the principal singular value and directional conductance in Section~\ref{sec:regularized-svd-with-1}, we expect the solution of \eqref{eq:ideal} to be not only D-connected but also to have low directional conductance relative to its size.

So far, we have discussed the properties of the directional community
 obtained by the $L_0$ regularized SVD formulation.
Next, we show that it can be solved efficiently by
using iterative matrix-vector multiplications combined with hard-thresholding.

\subsubsection{$L_0$ Regularized SVD Algorithm}

A local solution of \eqref{eq:ideal} can be found by iterative
hard-thresholding in the similar way of \cite{Shen:2008p1984} and
  \cite{dAspremont:2008p1869}. 
We start with exploiting the bi-linearity of the optimization problem \eqref{eq:ideal}.
For a fixed vector $\mathbf{v}$, we show how to solve the maximization problem 
with respect to $\mathbf{u}$. Here we first introduce a definition,

Given a vector $\mathbf{z}=(z_1, \ldots, z_n)' \in \mathbb{R}^n$, we denote the $l$-th largest 
absolute value of $\mathbf{z}$ as ${|z|}_{(l)}$. Consequently, we define $\mathbf{z}^h_l(\in \mathbb{R}^n)$ as the 
vector resulted from hard thresholding $\mathbf{z}$ by its
$(l+1)$-th largest absolute entry, i.e. the $i$-th element of
$\mathbf{z}^h_l$ is
\(\mathbf{z}^h_l(i) = z_i \, I(|z_i| > |z|_{(l+1)}) \)
while the superscript $``h"$ stands for hard-thresholding.

For a fixed $\mathbf{v}$, we may treat $Q \mathbf{v}$ as a generic vector $\mathbf{z}$ and find the 
solution $\mathbf{u}$ that maximizes \eqref{eq:ideal} through the following proposition.

\begin{proposition}
\label{prop:optim-L0}
For a given vector $\mathbf{z}$ and a fixed constant $\rho > 0$, the solution of 
\begin{equation} \label{def:maxL0}
\max_{\|\mathbf{u}\|_2 \leq 1} \mathbf{u}^t\mathbf{z} -\rho \|\mathbf{u}\|_0 
\end{equation}
is 
\[
\mathbf{u} = {\mathbf{z}^h_l} / {\|\mathbf{z}^h_l \|_2},
\]
where the integer $l$ is the minimum number that satisfies
\begin{equation}\label{eq:optim-L0}
|z|_{(l+1)} \leq \sqrt{ \rho^2 + 2 \,  \rho \, \| \mathbf{z}^h_l \|_2 }.
\end{equation}
\end{proposition}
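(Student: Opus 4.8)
The plan is to convert the combinatorial maximization \eqref{def:maxL0} into a one-dimensional search over the support size $k=\|\mathbf{u}\|_0$, and then to show that the stated stopping rule \eqref{eq:optim-L0} selects the maximizing $k$. First I would fix the support $S=\{i:u_i\neq 0\}$ with $|S|=k$. On this event the penalty $\rho\|\mathbf{u}\|_0=\rho k$ is constant, so maximizing the objective is the same as maximizing $\mathbf{u}^t\mathbf{z}$ subject to $\|\mathbf{u}\|_2\le 1$ and $\mathbf{u}$ supported on $S$. By the Cauchy--Schwarz inequality this inner maximum equals $\|\mathbf{z}_S\|_2$, attained by aligning $\mathbf{u}$ with $\mathbf{z}$ on $S$, i.e. $\mathbf{u}_S=\mathbf{z}_S/\|\mathbf{z}_S\|_2$. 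Maximizing $\|\mathbf{z}_S\|_2$ over all supports of size $k$ clearly selects the $k$ coordinates of largest absolute value, which is exactly $\mathbf{z}^h_k$. Hence for each fixed size $k$ the best vector is $\mathbf{z}^h_k/\|\mathbf{z}^h_k\|_2$, and the problem reduces to maximizing
\[
f(k):=\|\mathbf{z}^h_k\|_2-\rho k,\qquad k=0,1,\ldots,n.
\]

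Next I would reformulate the stopping condition in terms of the discrete increments of $f$. Writing $g(k):=\|\mathbf{z}^h_k\|_2$, the identity $g(k)^2=\sum_{j\le k}|z|_{(j)}^2$ gives $g(k+1)^2-g(k)^2=|z|_{(k+1)}^2$. The marginal gain $f(k+1)-f(k)=g(k+1)-g(k)-\rho$ is therefore nonpositive precisely when $g(k+1)\le g(k)+\rho$; since both sides are nonnegative, squaring and substituting the increment identity yields the equivalent inequality $|z|_{(k+1)}^2\le \rho^2+2\rho\,g(k)$, which is exactly \eqref{eq:optim-L0}. Thus \eqref{eq:optim-L0} holding at $l$ is equivalent to the statement that enlarging the support from $l$ to $l+1$ does not increase the objective.

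The main obstacle is passing from this local stopping criterion to global optimality, which I would resolve by a monotonicity (unimodality) argument. The right-hand side $\sqrt{\rho^2+2\rho\,g(k)}$ is nondecreasing in $k$ because $g$ is nondecreasing, whereas the left-hand side $|z|_{(k+1)}$ is nonincreasing in $k$ by definition of the order statistics. Consequently, once \eqref{eq:optim-L0} holds at some index it holds at every larger index, so the set of $k$ satisfying \eqref{eq:optim-L0} is upward-closed. This forces $f$ to be strictly increasing below the minimal such $l$ and nonincreasing at and above it, so $f$ attains its global maximum at this minimal $l$; the corresponding optimizer is $\mathbf{u}=\mathbf{z}^h_l/\|\mathbf{z}^h_l\|_2$, as claimed.

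The remaining care is with degenerate cases, which I would treat separately. The index $l=0$ (where the criterion reads $|z|_{(1)}\le\rho$) corresponds to the trivial optimizer $\mathbf{u}=0$, consistent with the fact that including any single coordinate $i$ yields objective $|z_i|-\rho\le 0$. Ties among the values $|z|_{(j)}$ only create several supports $S$ achieving the same $\|\mathbf{z}_S\|_2$, all equally optimal, so they affect neither the value of $f(k)$ nor the identification of the minimal $l$.
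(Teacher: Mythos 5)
Your proposal is correct and follows essentially the same route as the paper's proof: reduce the problem to maximizing $F(l)=\|\mathbf{z}^h_l\|_2-\rho\,l$ over the support size, observe that the stopping rule \eqref{eq:optim-L0} is exactly the condition that the increment $\|\mathbf{z}^h_{l+1}\|_2-\|\mathbf{z}^h_l\|_2$ falls below $\rho$, and conclude by unimodality that the minimal such $l$ is the global maximizer. Your write-up is in fact somewhat tighter than the paper's at the unimodality step (the upward-closedness of the set of indices satisfying \eqref{eq:optim-L0}, which the paper only asserts informally) and in treating the $l=0$ and tied-value cases explicitly.
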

\noindent
When the absolute values of $\mathbf{z}$ contains tied values, we pick one arbitrarily.

Proposition~\ref{prop:optim-L0} suggests a computationally efficient
algorithm to determine the threshold level. 
We
first sort the entries of $\mathbf{z}$ by their absolute values and
then sequentially search from the largest to smallest while testing if
condition \eqref{eq:optim-L0} has been met. As soon as
\eqref{eq:optim-L0} is satisfied, we obtain the hard-threshold level.
The computational complexity of this direct-searching algorithm is
$O(n\log (n))$.

Consequently, the solution of regularized SVD problem \eqref{eq:ideal} 
is obtained by using the searching algorithm
for a fixed $\mathbf{v}$ and for a fixed
$\mathbf{u}$ alternatively. Each step increases the objective function
monotonically, thus it converges to a local optimal.   
Algorithm~\ref{al:L0} lists the details.

\begin{algorithm}[t]
 \caption{$L_0$ regularized SVD}
\label{al:L0}
 \begin{algorithmic}[1]
   \Require {$ Q, \eta, \omega$}
\State initialize $\mathbf{v} $
\Repeat
\State $\mathbf{z} \gets Q\mathbf{v}$ , $\rho \gets \eta$
\State $\mathbf{u} \gets {\mathbf{z}^h_l} / {\|\mathbf{z}^h_l\|_2}$ , where
$l$ is the minimum integer s.t. $|z|_{(l+1)} \leq \sqrt{\rho^2  + 2 \, \rho \, \|\mathbf{z}^h_l\|_2}$
\State $ \mathbf{z} \gets Q^t\mathbf{u}$ , $\rho \gets \eta \, \omega$
\State $\mathbf{v} \gets {\mathbf{z}^h_l} / {\|\mathbf{z}^h_l\|_2}$, , where
$l$ is the minimum integer s.t. $|z|_{(l+1)} \leq \sqrt{\rho^2  + 2 \, \rho \, \|\mathbf{z}^h_l\|_2}$
\Until{ $ \mathbf{u}, \mathbf{v} $ converged}
\State \Return {$ \mathbf{u}, \mathbf{v} $}
 \end{algorithmic}
\end{algorithm}

The algorithm show similarity to HITS algorithm of
\cite{Kleinberg:1999p1428}, but algorithm~\ref{al:L0} uses Laplacian
matrix $Q$ instead of $W$.  Besides, the algorithm also has additional steps
that thresholds the membership vectors.  Consequentially, the algorithm can
detect a pair of sets of nodes constituting a local community instead it
converges to a principal singular vector of $Q$.

\subsection{Regularized SVD with Elastic-net Penalty}

In Section~\ref{sec:regularized-svd-with}, we find that the $L_0$
regularized SVD detects small and tight communities in direct networks
and it can be solved by an efficient algorithm based on the iterative
method combined with hard thresholding. 
Inspired by the discussion in \ref{prop:phi} about the
sparsity-inducing penalty, we also consider another type of penalty,
Elastic-net penalty of \cite{Zou:2005p1030} in a constraint form,
\begin{align}
\label{eq:objective-f-elastic}
& \max_{\mathbf{u} ,\mathbf{v} } \mathbf{u}^{t}Q \mathbf{v}, \\
 & \mbox{subject to } \quad (1-\alpha) \|\mathbf{u} \|_{2}^2 + \alpha \|\mathbf{u} \|_{1} \leq c_1,  
 \quad
 (1-\beta) \|\mathbf{v}\|_{2}^2 + \beta \|\mathbf{v}\|_{1} \leq c_2, \notag
\end{align}
where the sparsity level is controlled by parameters $\alpha \in [0,1)$ and $\beta \in [0,1)$. 
Note that $\alpha = \beta = 0 $ leads to the regular SVD problem. 
The optimization problem becomes non-convex when $\alpha \in (0,1)$ and $\beta \in (0,1)$.

We initially considered the constraint form of $L_0$ penalty in
order to search 
communities under a strict constraint on its size.  However, finding
a solution of the problem is challenging due to the discrete nature of
the constraint.  
We considered $L_1$ constraint
form that is proposed by \cite{Witten:2009}, but it did not report
significantly better solutions than $L_0$ regularized SVD solution 
\eqref{eq:ideal} in our simulation studies.  On the other hand,
the solution of Elastic-net constraint SVD shows different behavior 
than that of $L_0$ penalty.

\subsubsection{Elastic-net Regularized SVD Algorithm}
\label{sec:EN-algorithm}

We
show that a local solution of \eqref{eq:objective-f-elastic} 
can be found by the iterative method with soft-thresholding. 
Similar to the calculation of the $L_0$ regularized SVD, 
we take advantage of the bi-linearity of the optimization problem. 
For fixed $\mathbf{v}$ and $\alpha$, (or $\mathbf{u}$ and $\beta$), the optimization 
becomes convex, 
\begin{equation}
\label{eq:EN-u}
\max_{\mathbf{u} } \mathbf{u}^{t} \mathbf{z}, \quad
  \mbox{subject to } \quad (1-\alpha) \|\mathbf{u} \|^2_{2} + \alpha \|\mathbf{u} \|_{1} \leq c_1, 
\end{equation}
where $\mathbf{z} = Q\mathbf{v}$. 
Its global solution can be obtained through simple
soft-thresholding. 

We note that  \cite{Witten:2009}
showed similar results under slightly different
constraints. 
Our contribution is that we seek the soft-threshold level in the linear
time that is proportional to the number of non-zero entries of the
solutions, which makes the computation feasible for large matrix in
comparison to the binary search method proposed previously.
We verified that the linear search method is faster than the binary
search method by \(3\) to \(20\) times when the number of nodes in the network is
between \(10^3\) and \(10^7\).

To find the solution of \eqref{eq:EN-u}, we first introduce a
definition:
\begin{definition}
For a vector $\mathbf{z}=(z_1, \ldots, z_n)' \in \mathbb{R}^n$, recall the $l$-th largest 
absolute value of $\mathbf{z}$ was defined as ${|z|}_{(l)}$.
We define
\begin{align}
\label{eq:23}
G_{\mathbf{z}}(x) &=  \frac{1}{4x^2} \sum_{i = 1}^{k(x)-1} (|z|_{(i)}- x)^2+ 
\frac{1}{2x} \sum_{i = 1}^{k(x)-1} (|z|_{(i)} - x) 
\end{align}
where $ k(x) \in \{ 1, \dots , n+1 \} $ satisfies $ |z|_{(k(x))} \leq
x < |z|_{(k(x)-1)} $ and define $|z|_{(n+1)} = 0$. 
\end{definition}

We use a notation $S(\mathbf{z},d)$ for soft-thresholding of a vector
$\mathbf{z}$ by a scalar $d$. It is defined by
$S(\mathbf{z},d) = \operatorname{sign}(\mathbf{z}) (|\mathbf{z}| -
d)_{+}$, where $d > 0 $ and $x_{+} = \max\{x,0\}$. 

We find the solution
$\mathbf{u}$ that maximizes \eqref{eq:EN-u} by the following proposition:
\begin{proposition}
\label{thm:elastic}
For a fixed vector $\mathbf{z}$ and $\alpha$, the solution of \eqref{eq:EN-u} is 
\begin{align}
\mathbf{u} = {\frac{2d(1-\alpha)}{\alpha}} S(\mathbf{z}, d), \notag
\end{align}
and the threshold level $d$ is the solution of $G_{\mathbf{z}}(d) = c_1 ({1-\alpha})/{\alpha^2}$.
\end{proposition}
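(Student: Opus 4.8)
The plan is to treat \eqref{eq:EN-u} as a convex program and solve it through its Karush--Kuhn--Tucker (KKT) conditions. First I would note that for $\alpha \in [0,1)$ the constraint function $g(\mathbf{u}) = (1-\alpha)\|\mathbf{u}\|_2^2 + \alpha\|\mathbf{u}\|_1$ is convex and coercive, so the feasible set is convex and compact; since the objective $\mathbf{u}^t\mathbf{z}$ is linear, a maximizer exists, and (assuming $\mathbf{z} \neq \mathbf{0}$) it must lie on the boundary $g(\mathbf{u}) = c_1$. Because the problem maximizes a concave (indeed linear) objective over a convex set, the KKT conditions are both necessary and sufficient for global optimality, so it suffices to exhibit a feasible point together with a multiplier satisfying them.

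Next I would write the stationarity condition using subgradients to handle the nondifferentiable $\ell_1$ term. There is a multiplier $\lambda \geq 0$ with $\mathbf{z} \in \lambda\,\partial g(\mathbf{u})$, where $\partial g(\mathbf{u}) = 2(1-\alpha)\mathbf{u} + \alpha\,\partial\|\mathbf{u}\|_1$ and the $i$-th entry of $\partial\|\mathbf{u}\|_1$ equals $\operatorname{sign}(u_i)$ when $u_i \neq 0$ and lies in $[-1,1]$ when $u_i = 0$. Reading this coordinatewise gives $z_i = \lambda[2(1-\alpha)u_i + \alpha\,\operatorname{sign}(u_i)]$ on the support and $|z_i| \leq \lambda\alpha$ off the support. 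Solving the on-support equation shows $u_i$ shares the sign of $z_i$ and is nonzero exactly when $|z_i| > \lambda\alpha$, which yields the soft-threshold form $\mathbf{u} = \kappa\, S(\mathbf{z}, d)$ with threshold $d = \lambda\alpha$ and a positive scalar $\kappa$ fixed by $\lambda$ (equivalently by $d$).

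Finally I would determine the threshold by imposing that the constraint is active. Substituting $\mathbf{u} = \kappa\,S(\mathbf{z},d)$ into $g(\mathbf{u}) = c_1$ and grouping the surviving coordinates---those indices with $|z_i| > d$, i.e. $i = 1, \ldots, k(d)-1$ in the ordered notation---collapses the constraint to a single scalar equation in $d$. After substituting $\lambda = d/\alpha$ and simplifying, the $\ell_2$ and $\ell_1$ pieces assemble precisely into $G_{\mathbf{z}}(d)$, so the active-constraint equation becomes $G_{\mathbf{z}}(d) = c_1(1-\alpha)/\alpha^2$, giving both the stated threshold condition and the closing scale factor on $S(\mathbf{z},d)$.

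The step I expect to be the main obstacle is establishing that this scalar equation has a unique solution, since $G_{\mathbf{z}}$ is only piecewise smooth: its functional form changes as $d$ crosses each value $|z|_{(i)}$, where $k(d)$ jumps. I would argue that on each interval where $k(d)$ is constant the expression reduces to a manifestly strictly decreasing function of $d$, that $G_{\mathbf{z}}$ is continuous across the breakpoints, and that it tends to $+\infty$ as $d \to 0^+$ and to $0$ as $d \to |z|_{(1)}^-$; monotonicity then guarantees a unique root and hence a well-defined threshold. A secondary, more routine, check is verifying the constraint qualification and confirming that the produced KKT point is genuinely the global maximizer, which follows from the convexity already noted.
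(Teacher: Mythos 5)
Your proposal takes essentially the same route as the paper's own proof: a Lagrangian/KKT argument in which the subgradient of the $\ell_1$ term forces the soft-threshold form with threshold $d=\lambda\alpha$, followed by substitution into the active constraint to reduce everything to the scalar equation $G_{\mathbf{z}}(d)=c_1(1-\alpha)/\alpha^2$, whose solvability rests on the monotonicity and range of $G_{\mathbf{z}}$ (exactly what the paper establishes in Lemma~\ref{lem:thresh}). The only substantive difference is that you leave the scale factor as an abstract $\kappa>0$, whereas the paper's derivation pins it down as $1/(2\lambda(1-\alpha))=\alpha/\bigl(2d(1-\alpha)\bigr)$ --- the reciprocal of the constant $2d(1-\alpha)/\alpha$ displayed in the proposition statement --- so your caution incidentally sidesteps an apparent typo in the paper.
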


Then, a local solution of \eqref{eq:objective-f-elastic} can be found by
the alternative-maximization as in $L_0$ regularized SVD,
with steps 4 and 6 of Algorithm~\ref{al:L0} replaced by 
\begin{itemize}
\item [] step 4: $\mathbf{u} \gets  {\frac{2d(1-\alpha)}{\alpha}}
  S(\mathbf{z}, d)$ where $d$ satisfies $G_{\mathbf{z}}(d) = c_1 ({1-\alpha})/{\alpha^2}$
\item [] step 6: $\mathbf{v} \gets  {\frac{2d(1-\beta}{\beta}}
  S(\mathbf{z}, d)$ where $d$ satisfies $G_{\mathbf{z}}(d) = c_2 ({1-\beta})/{\beta^2}$.
\end{itemize}
The Algorithm involves searching for the soft threshold level $d$ in
equation $G_{\mathbf{z}}(d) = c$.
An efficient algorithm for finding the solution $d$ is
described in Algorithm~\ref{al:hatk} in Appendix~\ref{app-a}.

In summary, we propose two linearly scalable algorithms, the $L_0$
regularized SVD and the Elastic-net regularized SVD, for extracting
one community from a directed graph. In the next section, we will
apply these community extraction algorithms repeatedly to a network
and harvest tight communities sequentially.

\subsection{Community Extraction Algorithm}

We first emphasize the computational advantage of identifying one
community at a time for large networks. 
For example, \cite{clauset2005finding} discussed an approach of local
community detection in the application of World-Wide-Web, which cannot
even be loaded to a single machine's memory.
Algorithm~\ref{al:L0} uses only the
out-links of the current source nodes and the in-links of the current
terminal nodes in the matrix multiplication steps.  We will exploit
this property to devise a local community detection algorithm.

The regularized SVD algorithms require
the sparsity parameters, $\eta$ in \eqref{eq:ideal} or ($\alpha, \beta$) 
in \eqref{eq:objective-f-elastic} and a starting vector $\mathbf{v}$ or $\mathbf{u}$ to initialize the algorithm. 
In this section, we first discuss the effect of these 
parameters and how to choose them in practice. Then we propose a 
community harvesting scheme that repeatedly use the regularized SVD algorithm to 
extract communities.

\subsubsection{Parameter Selection and Initialization for Regularized SVDs}

We now study the effect of the 
penalization parameters on the algorithm outputs.
First, for Elastic-net regularized SVD, we point out that 
parameters $c_1$ and $c_2$  in \eqref{eq:EN-u} can be set to
one as default values, since they only affect the magnitude of the solution vectors.
Second, we find that imposing different sparsity to source nodes and
terminal nodes can be useful modification to the algorithms. 
However, we leave the investigation as a future work and we assume the
same sparsity levels in the rest of this paper.
Thus, we set $w = 1$ for $L_0$ regularized SVD and $\alpha = \beta$
for Elastic-net regularized SVD. 

We propose to use the directional conductance, $\phi(C(S,T))$ in
\eqref{eq:d-conductance} to find the best community among candidates.
Computing $\phi$ is inexpensive even for large networks
if degrees of nodes and the number of edges are known.  Although
$\phi$ may not be a good measure for comparing
communities in dramatically different sizes, it is still a decent measure for
similar-sized communities.  Thus, we will look for the community
achieving a local minimum value of $\phi$ over the smooth
change of communities.

Candiate communities are obtained by changing sparsity parameters 
($\eta$ for $L_0$ regularized SVD, $\alpha$ for $EN$ regularized SVD) 
smoothly. 
The solution $\mathbf{v}^{\ast}$ at the
current sparsity level can be used as the initial vector at the next
contiguous sparsity level.
The small change in the sparsity levels allows the algorithm converge
in few iterations without dramatic changes in solutions.
We consider a sequence of decreasing sparsity levels to obtain a
sequence of growing communities.  Starting with a small community,
this strategy lets us investigate relatively small communities in a
huge network by only visiting small fraction of the whole collection of edges.

We name the identified community $(S, T)$ from this method a
\emph{Approximated Directional Component} ($ADC$), to distinguish it
from directional components.  The steps of the algorithm are described
in Algorithm~\ref{al:c-d-comp}.  We note that one may simply replace the 
$L_0$ regularized SVD with Elastic-net regularized SVD.

\begin{algorithm}[tb]
 \caption{Community Extraction via $L_0$ Regularized SVD}
\label{al:c-d-comp}
\begin{algorithmic}[1]
\Require {$Q$, initialization vector
  $\mathbf{v}_0$, decreasing sequence of sparsity levels $\{\eta_i\}_{i = 1, \ldots, I}$}
\State initialize \(\mathbf{v} \gets \mathbf{v}_0\)
\For { \(i = 1\) to \(I\)}
\State Obtain \(\mathbf{u}^{*}, \mathbf{v}^{*}\) by running Algorithm~\ref{al:L0} with $(Q,
\eta_{i})$ and initialization \(\mathbf{v}\).
\State $S = \{v: \mathbf{u}^{\ast}(v) \neq 0 \}$ and $T = \{v: \mathbf{v}^{\ast}(v) \neq 0\}$
\State \(\phi_i \gets \phi(C(S,T))\) 
\State \((S^i, T^i) \gets (S,T)\),  \(\mathbf{v} \gets \mathbf{v}^{*}\)
\EndFor
\State \Return {$S = S^j$ and $T = T^j$,
 where $j$ corresponds to a local minimum in $\{\phi_1, \ldots, \phi_I\}$}.
\end{algorithmic}
\end{algorithm}

The algorithm requires a user to specify an initialization vector  $\mathbf{v}_0$ and a
sequence of the sparsity level parameters. 
The initialization vector $\mathbf{v}_0$ can be set as $\mathbf{1}_{ \{ v_{i}\} }$ with a randomly picked 
 $v_i$ with nonzero degree or can be set as the node with a large
 degree to discover the larger communities first. We use the later as
 default. 

 The searching for candidate communities can be stopped early if the
 conductance value reaches a local minimum of sufficiently low $\phi$.
 A simple implementation is to stop searching if the conductance value
 of the current candidate-$ADC$ bounces up to higher than $s_p$ ($s_p > 1$) times
 of the minimum conductance value of the previously detected
 candidate-$ADC$s.  Besides, we pre-specify a bound \(s_l\) $(0 < s_l <1)$ on the
 desired conductance value so we only stop searching early at a
 community with the conductance value lower than \(s_l\).  This
 stopping rule saves computation burden and keep the quality
 of communities.  We will use this early stopping rule in
 Section~\ref{sec-6}.

\subsubsection{Community Harvesting Algorithm}
\label{sec:comm-harv-algor}

In order to identify all tight communities in a directed network, we propose to apply
Algorithm~\ref{al:c-d-comp} repeatedly through a community harvesting scheme.
The idea of community extraction has been discussed in~\cite{zhao2011}, 
in which a modularity based method is introduced. 

Starting with the graph Laplacian matrix $Q$ of the full network, we
first apply Algorithm~\ref{al:c-d-comp} with $L_0$ or Elastic-net
penalty to identify an $ADC(S, T)$.  Then all entries in $Q$ that
corresponds to the submatrix of ($S, T$) are set to zero and we
reapply the same algorithm to the reduced \(Q\) matrix with a
different initialization to identify the next $ADC$.  It is continued
until the remaining edges are less than a pre-determined number $M$,
say $10\%$ of total number of edges.  Typically, the remaining
network contains tiny directional components which are originated from
the edges between communities.
We call this procedure \emph{harvesting} communities.

The harvesting algorithm takes different approach from other sparse
SVD algorithms for obtaining multiple sparse singular vectors.
\cite{Witten:2009} and \cite{Lee:2010p1493} use the residual matrix,
\( Q - s\mathbf{u}\mathbf{v}^t\) where \(s\) is pseudo singular value,
to obtain multiple sparse singular vectors. This approach does not fit
to our purpose because only the principal singular vector of a
submatrix is required for a directional component. In addition,
harvesting algorithm keeps the sparsity of \(Q\) whereas the other
approaches make the residual matrix more dense.

This scheme of harvesting edges of a detected community also allows
multiple memberships of nodes in both of source parts and terminal
parts.  On the other hand, this sequential removal of edges
may give a concern regarding the stability of the
detected communities. We observed the communities of low $\phi$ are
stable under different initializations and orders of
harvesting.

\subsection{Computational Complexity}
\label{sec:computation-times}
One driving motivation of this paper is the scalability of 
community detection algorithms on large or massive networks. 
Here, we investigate the harvesting algorithm's computational
complexity and memory requirement. 

In the specification of harvesting algorithms discussed in
Section~\ref{sec:comm-harv-algor}, there are four parameters: the
number of sparsity levels ($I$), the number of detected communities
($K$), the number of edges ($m$), and the number of nodes ($n$). The
complexity of a harvesting algorithm is \(O(IK(m+n\log{n}))\).  If the
optimal sparsity level is known, $I$ can be dropped.   Parallel
computing may potentially reduce the computation time by the factor of
$K$ if multiple communities can be searched simultaneously.

The computer memory requirement is mainly determined by $m$.
But for huge network data that cannot be fit into a machine, relatively
small sub-network can be explored locally. 
In this case, regularized SVDs only require a sub-network of $\sum_{v_i
  \in S} d_{r,i} + \sum_{v_i \in T} d_{c,i}$ edges and the source part
$S$ and the terminal part $T$ change smoothly over the iterations. 
We believe a parallel version of the harvesting algorithm is
a promising approach to tackle massive modern networks. 

The computation time may vary depending on the
settings of the algorithm and the data at hand. 
We report the computation time for the two 
large networks, a citation network and a social network, in Section~\ref{sec-6}.

\section{SIMULATION STUDY}
\label{sec-5}

In this section, we evaluate the performance of two harvesting
algorithms, $L_0$-harvesting and $EN$-harvesting under various
settings of community structures. 
 In addition to the harvesting algorithms, 
 the DI-SIM algorithm is included for comparison.

To generate networks with different types of community structures, we 
use a benchmark model proposed in \cite{LFR2008}, referred as 
LFR model. 
As a remark, currently the LFR model only generates symmetric directional
communities, which means the source part and the terminal part consist of
the same nodes. 
To generate a network with asymmetric communities, we shuffle the labels of
terminal nodes of the network generated from the LFR model. 

In our study, we generate networks from the LFR model with
$n=\num{1000}$ nodes, whose in-degrees follow a power law (with decay
rate $\tau_1 = -2$) with maximum at $k_{max} = 50$.  The sizes of the
communities in each network follow a power law with a decay rate
$\tau_2 = -1$ and the sizes of source part and terminal part are the
same.  We vary three sets of parameters of LFR model to control
different aspects of the simulated networks:

\begin{itemize}
\item[-] Range of community sizes: Set
  $(SZ_{\omega=1}(C)_{\mbox{min}}, SZ_{\omega=1}(C)_{\mbox{max}})$ as
$(40, 200)$ for big communities and $(20, 100)$ for small communities;
\item[-] Average degrees (in-degree and out-degree) \(k\): 
\(\{ 5, 10, 20\} \)
for sparse, median and dense networks;
\item[-] Proportion of external edges \(\mu\): 
\(\{0.05, 0.2, 0.4\} \). 
\end{itemize}

We measure the 
accuracy of community detection results 
by a mutual information based criterion that was proposed by
\cite{Lancichinetti:2009p2089}. 
The range of the accuracy measure is $[0 ,1]$ and the
larger the better.
Configurations of the algorithms in comparison are presented in
 Appendix~\ref{app-simulation-settings}.

The simulation results of thirty repetitions are reported in
Table~\ref{tab:simul}. 
We include \emph{Infomap} as a reference, 
 which shows the best
 performance in the report of  \cite{Lancichinetti:2009p1814}. 
As a remark, the accuracy of
\emph{Infomap} is measured in the symmetric directional communities
before shuffling the labels thus it is only valid as a
practical upper bound.  We emphasize that the performance of
Infomap on asymmetric directional communities is unsatisfactory in
general. 

\begin{table}[ht]
\caption{\label{tab:simul} 
Accuracies of four algorithms, $L_0$-harvesting,
 $EN$-harvesting, DI-SIM and Infomap in eighteen ($2 \times 3 \times
 3$) parameter combinations. The size
  of communities ranges in \(40 \sim 200\) for big communities and
  \(20 \sim 100\) for small communities. Average accuracies of thirty repetitions are reported along with standard errors. 
 The accuracies of Infomap
  cannot be directly compared to other methods since they are measured in symmetric directional
  communities while other three methods are applied on asymmetric
  directional communities.
}
\begin{center}
\begin{tabular*}{\textwidth}{ c|m{0.8cm}m{0.8cm}m{0.8cm}|m{0.8cm}m{0.8cm}m{0.8cm}|m{0.8cm}m{0.8cm}m{0.8cm}}
    \toprule
    \multicolumn{1}{c}{Degree}& \multicolumn{3}{c}{20} &
    \multicolumn{3}{c}{10} & \multicolumn{3}{c}{5}\\
$\mu$ & $0.05$ & $0.2$ & $0.4$ & $0.05$ & $0.2$ & $0.4$ & $0.05$ & $0.2$
& $0.4$\\
\midrule
\multicolumn{1}{c}{} &  \multicolumn{9}{c}{Big communities} \\
$L_0$ &$0.968$ \tiny{$(0.001)$}& $0.967$ \tiny{$(0.000)$}& $0.967$ \tiny{$(0.000)$} & $0.970$ \tiny{$(0.001)$}& $\bf{0.963}$ \tiny{$(0.001)$} & $\bf{0.778}$ \tiny{$(0.017)$}&  $\bf{0.924}$ \tiny{$(0.002)$} & $0.707$ \tiny{$(0.008)$}& $0.072$ \tiny{$(0.007)$}
 \\
$EN$ & $\bf{0.999}$ \tiny{$(0.003)$} & $\bf{0.999}$ \tiny{$(0.000)$}
& $\bf{0.994}$ \tiny{$(0.001)$} & $\bf{0.994}$ \tiny{$(0.001)$} & $0.956$ \tiny{$(0.002)$}& $0.195$ \tiny{$(0.011)$}& $0.851$ \tiny{$(0.008)$} & $0.446$ \tiny{$(0.017)$}& $0.023$ \tiny{$(0.004)$}\\
DI-SIM & $0.827$ \tiny{$(0.007)$}& $0.904$ \tiny{$(0.008)$}
& $0.946$ \tiny{$(0.007)$} & $0.840$ \tiny{$(0.007)$}& $0.910$ \tiny{$(0.006)$}& $0.237$ \tiny{$(0.006)$}& $0.845$ \tiny{$(0.007)$}& $\bf{0.786}$ \tiny{$(0.006)$} & $\bf{0.237}$ \tiny{$(0.011)$}\\
Infomap & $1.000$ \tiny{$(0.000)$}& $1.000$ \tiny{$(0.000)$}
& $1.000$ \tiny{$(0.000)$}  & $0.998$ \tiny{$(0.000)$}& $0.996$ \tiny{$(0.000)$} & $0.976$ \tiny{$(0.002)$}& $0.879$ \tiny{$(0.005)$}& $0.749$ \tiny{$(0.008)$} & $0.298$ \tiny{$(0.014)$}\\
\midrule
\multicolumn{1}{c}{} &   \multicolumn{9}{c}{Small communities} \\
$L_0$ & $0.953$ \tiny{$(0.002)$} & $0.917$ \tiny{$(0.010)$}
& $0.940$ \tiny{$(0.005)$} & $0.952$ \tiny{$(0.002)$} & $0.921$ \tiny{$(0.006)$} & $0.857$ \tiny{$(0.009)$}& $\bf{0.910}$ \tiny{$(0.003)$} & $0.696$ \tiny{$(0.009)$} & $0.106$ \tiny{$(0.007)$} \\
$EN$ & $\bf{0.996}$ \tiny{$(0.001)$} & $\bf{ 0.991}$ \tiny{$(0.001)$}
& $\bf{0.992}$ \tiny{$(0.001)$}& $\bf{0.984}$ \tiny{$(0.001)$}& $\bf{0.937}$ \tiny{$(0.003)$}& $0.526$ \tiny{$(0.019)$}& $0.883$ \tiny{$(0.004)$} & $0.585$ \tiny{$(0.007)$} & $0.061$ \tiny{$(0.004)$} \\
DI-SIM  & $0.762$ \tiny{$(0.005)$}& $0.821$ \tiny{$(0.006)$}
& $0.871$ \tiny{$(0.004)$}  & $0.762$ \tiny{$(0.005)$}& $0.847$ \tiny{$(0.005)$} & $\bf{0.894}$ \tiny{$(0.005)$}& $0.751$ \tiny{$(0.005)$} & $\bf{0.702}$ \tiny{$(0.005)$} & $\bf{0.309}$ \tiny{$(0.011)$}\\
Infomap & $0.999$ \tiny{$(0.000)$}& $0.997$ \tiny{$(0.001)$}
& $0.998$ \tiny{$(0.001)$} & $0.994$ \tiny{$(0.001)$}& $0.992$ \tiny{$(0.001)$}& $0.983$ \tiny{$(0.001)$}& $0.910$ \tiny{$(0.003)$}& $0.778$ \tiny{$(0.005)$}& $0.464$ \tiny{$(0.009)$}\\
    \bottomrule
\end{tabular*}
\end{center}
\end{table}

In the setting for big communities, Table~\ref{tab:simul} show
that our harvesting algorithms give almost perfect
recovery when average degree is high and/or mixing parameter $\mu$ is
low. 
$EN$-harvesting shows better accuracy than
$L_0$-harvesting for strong community settings while
$L_0$-harvesting excels in the low degree setting. 
The DI-SIM algorithm fails to give
high accuracies even for the strong communities. 

The accuracies of the harvesting algorithms remain close to the
results of big communities. 
However, DI-SIM algorithm seems to be less accurate
in the case of 
the larger number of communities.

 In our experiment, we also find that the performance of harvesting
 algorithms for strong communities is  
 close to that of \emph{Infomap} applied for symmetric communities. 
 \emph{Infomap} cannot detect asymmetric directional
 communities in contrast to harvesting algorithms.

\section{Applications}
\label{sec-6}
In this section, we apply the proposed harvesting algorithms to 
two highly asymmetric directed networks,
a paper citation network and a social network.

\subsection{Cora Citation Network}
\label{sec-6-1}
Cora citation network is
a directed network formed by citations among Computer Science (CS) research papers.
We use a subset of the papers that have been
manually assigned into categories that represent
10 major fields in computer science, which is further divided into 70
sub-fields. 
This leads to a network of  $\num{30228}$ nodes and 
$\num{110654}$ edges after removing self-edges.
In this citation network, only $5.4 \%$ of edges are symmetric. 
The average degree is \(3.66\), which is relatively low.

The algorithms start at the terminal nodes with
the largest in-degree among un-harvested nodes at each harvesting run.
The sparsity levels are determined so that candidate $ADC$s may cover
up to $50\%$ of nodes. 
The sparsity parameter $\eta$ in the \(L_0\)-harvesting takes
values decreasingly in a grid \( \{ \exp(-k): k = 10+i (8/200), i = 1,
\ldots, 200 \} \).  Similarly, the sparsity parameter $\alpha$
in the \(EN\)-harvesting takes values decreasingly in a grid \(
\{\frac{1}{1+\exp(k)}: k = 2 + i (7/200), i = 1, \ldots, 200 \} \).
The nonlinear decreasing setup helps to obtain gradual expansion of the
candidate-$ADC$s at low sparsity levels.  Early stopping parameters
are set to \(s_p= 1.4\) and \( s_l=0.4\).  Each algorithm runs
until it harvests $90\%$ of edges. 
$L_0$-harvesting discovered $51$
communities  in $4$ minutes and $EN$-harvesting discovered $78$
communities in $9$ minutes.

For both harvesting algorithms, we first provide a summary of the largest
twenty $ADC$s discovered. 
We name the $ADC$ obtained in the \(L_0\)-harvesting 
\(ADC^{L_0}\) and the ones obtained by the \(EN\)-harvesting
\(ADC^{EN}\). 
Out of total \(\num{110654}\) edges, the first twenty \(ADC^{L_0}\)s
cover \(\num{82372}\) 
edges ($74\%$) and the first twenty \(ADC^{EN}\)s cover
\(\num{88756}\) edges ($80\%$).

Most of detected communities have larger source part than terminal part
which reflects the presence of late papers that are not yet cited much.
Overall, we found \(ADC^{L_0}\)s are better than \(ADC^{EN}\)s
based on the comparison of conductance values. 
This result is consistent with the simulations in
Section~\ref{sec-5} that \(L_0\)-harvesting performs better 
in networks of low average-degrees  
(For more details, see Table~\ref{table:summary-d-comps} in Appendix~\ref{app-c}).

\subsubsection{Comparison to DI-SIM and Infomap}
\label{sec:comp-to-DI-SIM-and-Infomap}
To highlight the existence of asymmetric directional communities, two
existing community detection algorithms are also applied for
comparisons.  First, the DI-SIM algorithm \citep{KarlRohe:2012p1560}
is applied as an example of methods providing two separate
partitions. The required number of communities is set as the number of
major-fields in CS, which is ten.
Second, we applied the \emph{Infomap} algorithm of
\cite{Rosvall:2009p2201} with details given in Appendix~\ref{app-simulation-settings}.
We provide a visual comparison of communities detected by those four
algorithms in Figure~\ref{fig:visual-intro} and \ref{fig:visual}.

\begin{figure}[htb]
\centering
\begin{subfigure}{0.35\textwidth}
\includegraphics[ width=1\linewidth]{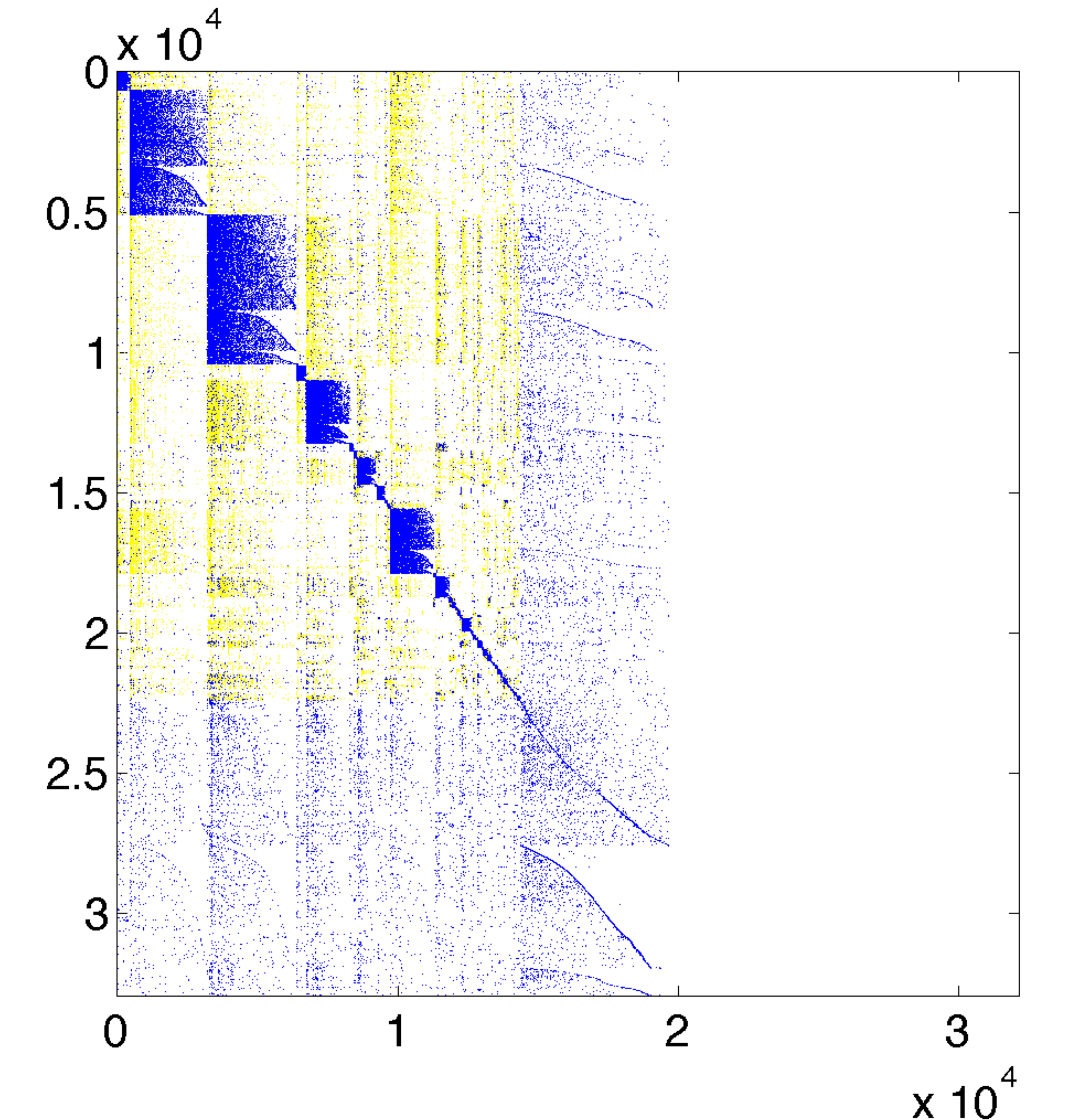}
  \caption{$EN$-harvesting }
\end{subfigure}%
\hspace{0.5in}
\begin{subfigure}{0.35\textwidth}
\includegraphics[ width=1\linewidth]{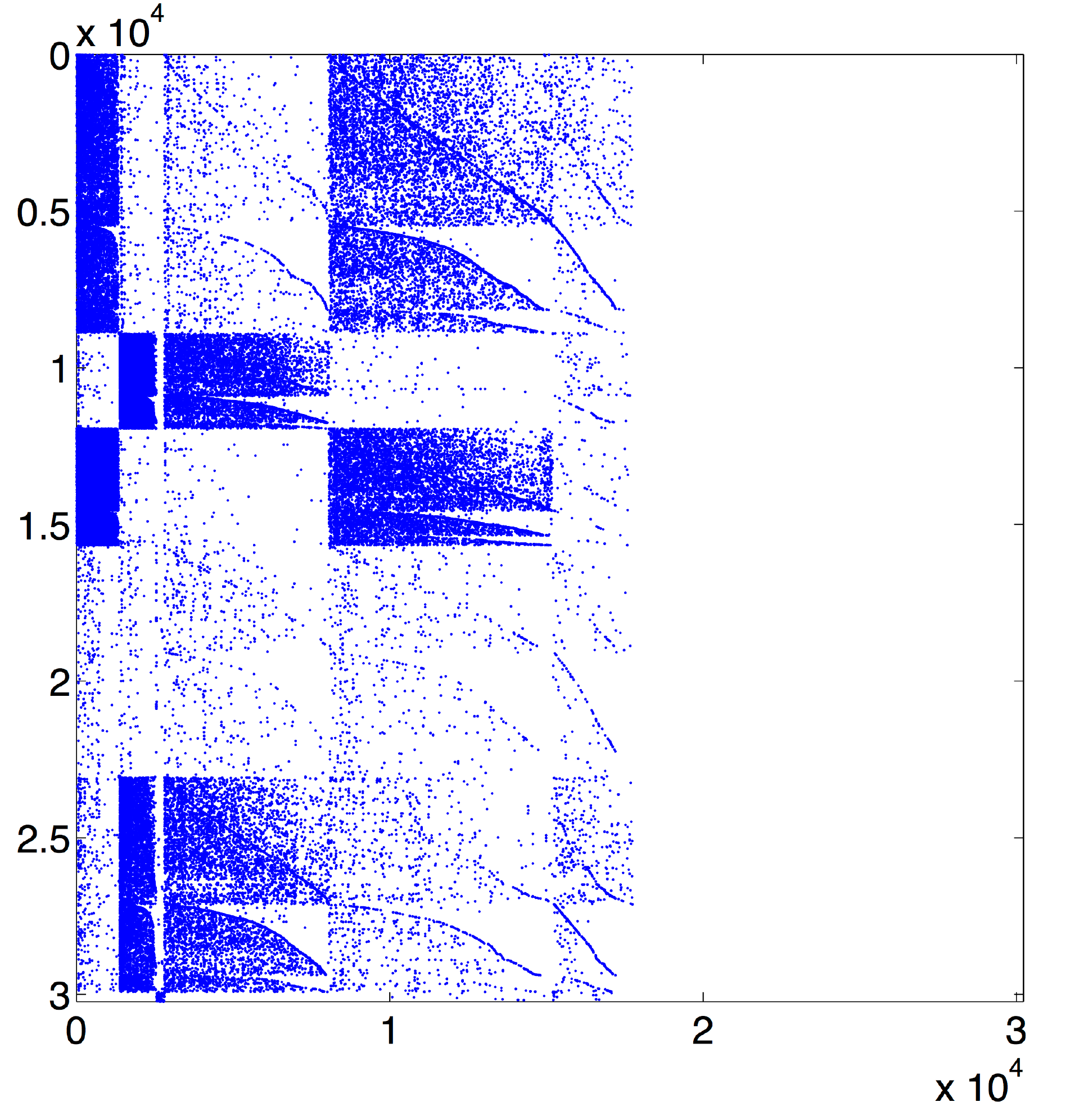}
  \caption{ \label{fig:visual-DI}DI-SIM }
\end{subfigure}%
 \caption{\label{fig:visual} (a): Cora citation network arranged by directional communities.  The rows and columns are arranged by the source parts
   and the terminal
 parts of the 78 $ADC^{EN}$s and remaining
 nodes are appended at the end of rows and columns. 
 (b): Cora citation network with rows and columns reordered by the
 result of the DI-SIM algorithm.  }
\end{figure}

Visualization of the communities detected by harvesting algorithms
is not straightforward due to the possibility of multiple memberships.
The rows and columns are arranged by the source parts and the terminal
parts of the detected $ADC$s and the remaining
nodes are appended at the end of rows and columns. 
 Internal edges of $ADC$
appear as blue blocks in the diagonal.
Meanwhile, blue dots outside the
blocks are the edges that are not harvested in the algorithm.
Yellow dots are the internal edges that reappear because of the
multiple membership of nodes.

The result from the DI-SIM algorithm is summarized by the
adjacency matrix with rows and columns reordered by the
two separate partitions (Figure~\ref{fig:visual-DI}).
The adjacency matrix rearranged by the communities of
\emph{Infomap} is shown in Figure~\ref{fig:visual-info}.

The communities detected by harvesting algorithms reveal different
representation of the underlying structure than other two methods.
First, harvesting algorithms capture asymmetric nature of the
communities in the citation network. The symmetric assumption of
Infomap yields tiny communities that are less significant.
Second, the proposed algorithms reveal correspondence between source
nodes and terminal nodes while DI-SIM treats them separately.

\subsubsection{Correspondence between Communities  and Manual Categories}
\label{sec:6-manual}

The manually assigned categories of papers (Table~\ref{table:fields}) in Cora citation network 
provided us with extra information to validate the
quality of the detected communities.
The sizes of category span a large range, from $582$ papers in
Information Retrieval to $\num{10784}$ papers in Artificial
Intelligence.
Given the categories, we calculate the conductance value of each category
to see the quality of a category as a community. 
Those values are greater than those of \(ADC^{L_0}\)s in
general (See Table~\ref{table:summary-d-comps} in Appendix~\ref{app-c}).

\begin{table}[ht] \footnotesize
\caption{\label{table:fields} List of ten fields of Computer Science
  and their number of papers and conductance.}
\centering
\begin {tabular}{r|l|r|r}
\toprule
Number & Name of Major Field of CS                   & Number of Papers & $\phi$ \\ 
\midrule
1  & Artificial Intelligence               & 10784 &  0.1568 \\   
2  & Data Structures Algorithms and Theory & 3104  &  0.3854 \\
3  & Databases                             & 1261  &  0.3429 \\
4  & Encryption and Compression            & 1181  &  0.4096 \\
5  & Hardware and Architecture             & 1207  &  0.4762 \\      
6  & Human Computer Interaction            & 1651  &  0.4527 \\
7  & Information Retrieval                 & 582   &  0.3932 \\      
8  & Networking                            & 1561  &  0.3686 \\
9  & Operating Systems                     & 2580  &  0.3736 \\      
10 & Programming                           & 3972  &  0.3178 \\  
 \bottomrule                                
\end{tabular}
\end{table}

\begin{table}[ht]
\caption{\label{table:cate-L0}Number of papers in the first six 
  approximated directional components of \(L_0\)-harvesting for each category. }
  \centering
\begin{tabular}{l|rrrrrrrrrrr}
  \toprule
     &    AI  &  DSAT  &   DB  &   EC  &   HA  &  HCI  &  IR  &  Net  &   OS  &  Prog  \\
  \midrule
1    &       106  &   199  &   56  &   18  &  255  &   17  &   0  &   55  &  900 \cellcolor[gray]{0.9} &  1779 \cellcolor[gray]{0.9}  \\
2     &      2741 \cellcolor[gray]{0.9}  &    68  &   30  &    9  &    8  &   28  &  63  &   17  &   34  &    75  \\
3     &      13  &    12  &   25  &  115  &   11  &  307  &  18  &  936 \cellcolor[gray]{0.9}  &  232  &    34  \\
4     &      727  \cellcolor[gray]{0.9} &   124  &    8  &  102  &   12  &  577  \cellcolor[gray]{0.9} &  10  &    6  &   22  &    21   \\
5     &      284  &    83  &  803 \cellcolor[gray]{0.9}  &    9  &    9  &   17  &  66  &   14  &   16  &    80   \\
6     &      149  &   452  \cellcolor[gray]{0.9}&    3  &  239 \cellcolor[gray]{0.9} &   12  &    0  &   2  &    3  &    9  &     6  \\
 \bottomrule
\end{tabular}
\end{table}

We investigate the correspondence between the detected communities and
the manually assigned categories.  We consider the largest $6$
communities of $L_0$-harvesting since they show significantly lower
conductance than the rest of communities.  The comparison is reported
in Table~\ref{table:cate-L0}.
The six largest communities \(ADC^{L_0}_1, \ldots
,ADC^{L_0}_6\) show significant correspondence to the major-fields of CS.
For example, \(ADC^{L_0}_1\)  mainly consists of two fields, operating system (OS) and
 programing (Prog), also \(ADC^{L_0}_2\) is dominated by papers from
 artificial intelligence (AI).

 Remaining smaller communities showed high precision and low recall
 with respect to the major-fields.
Some of them seem to be fragments that are not
 strongly connected to bigger communities.  However, we found that
 many of them showed correspondence to sub-fields embedded in
 major-fields.  For example, many of the small communities are related
 to AI category and they represent interactions in sub-fields of AI.

The communities detected by harvesting algorithms meet our
 expectations regarding the manual categories.  
 The detected communities revealed densely connected papers that can
 be considered as a core part within a manual category. 
 We also suspect a possible hierarchical
 community structure within the large communities and we leave
 investigations along this direction in our future work.

\subsection{Harvesting Algorithms in a Large Social Network}
\label{sec:social-network}

The massive size of modern network data, 
more than millions of nodes in a network, demands 
scalable algorithms.
Many community detection algorithms that search for optimal 
partition of nodes do not scale well.
On the other hand, harvesting algorithms detect a community 
at a time based on a locally defined quality measure. 
In this experiment, we test our harvesting
algorithms on a social network that is large and highly asymmetric.

We analyze a social network
dataset\footnote{http://www.kddcup2012.org/c/kddcup2012-track1/data}
of Tencent Weibo, a micro-blogging website of China.  Users in this
network may subscribe to news feeds from others and each
subscription is represented as a directed edge between users.  This
network contains \(\num{1944589}\) non-zero degree nodes and
\(\num{50655143}\)  edges, which leads to an average
out-degree $25$.  The social network is highly asymmetric and it has
only $0.2\%$ of symmetric links.

The computation time to harvest $\num{1000}$ $ADC^{L_{0}}$ was about
$12$ hours and that of harvesting $\num{463}$ $ADC^{EN}$ was around
$6$ hours. The algorithms are
run in a linux machine (2$\times$ Six Core Xeon X5650 / 2.66GHz /
48GB).
The settings of the algorithms can be found in Appendix~\ref{sec:algo-settings}.

To check the quality of harvested communities, we report the conductance
values, $\phi$, along with the size of $ADC$s
(Figure~\ref{fig:KDD-size-phi}). 
$L_0$-harvesting is better at detecting larger communities while
$EN$-harvesting tends to detect smaller communities and a few very
large communities.
 We also display the $\num{1000}$
largest communities obtained by \emph{Infomap}, whose directional
conductances are computed under the symmetric constraint $S=T$.
The communities found under the symmetric assumption show
relatively higher conductance values. 
Additionally, we verified that good communities are relatively small
($\sim 200$) in such huge social networks, as reported in
\cite{Leskovec:2008p2197}.

\begin{figure}[tb]
\centering
\begin{subfigure}{0.49\textwidth}
\includegraphics[width=1\textwidth]{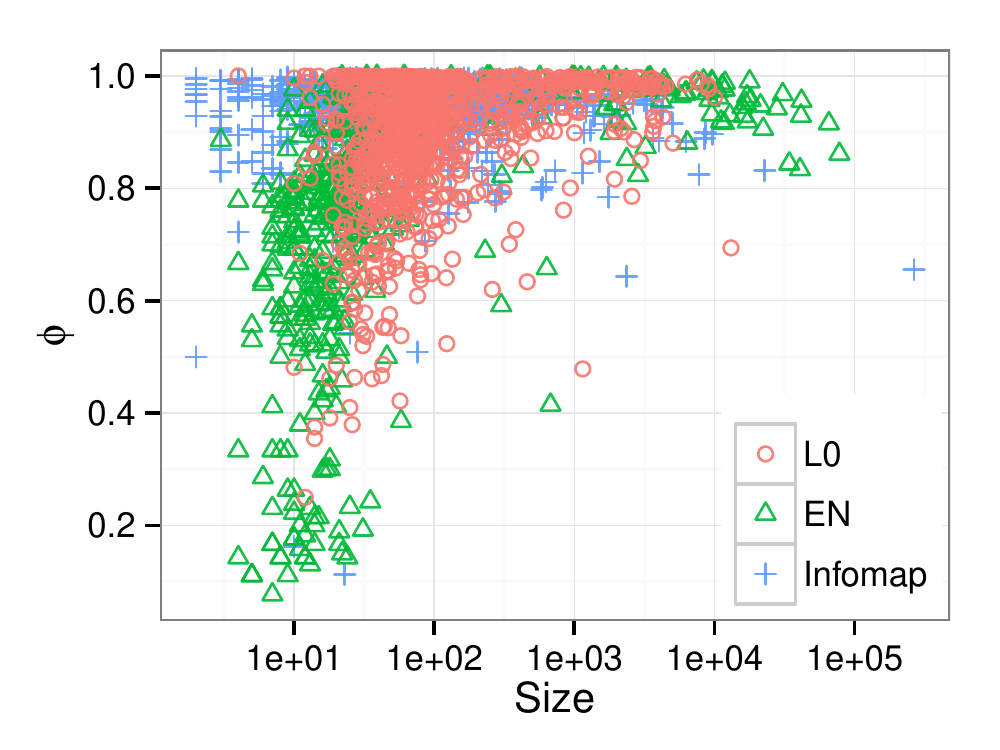}
  \caption{\label{fig:KDD-size-phi} }
\end{subfigure}
\begin{subfigure}{0.49\textwidth}
\includegraphics[width=1\textwidth]{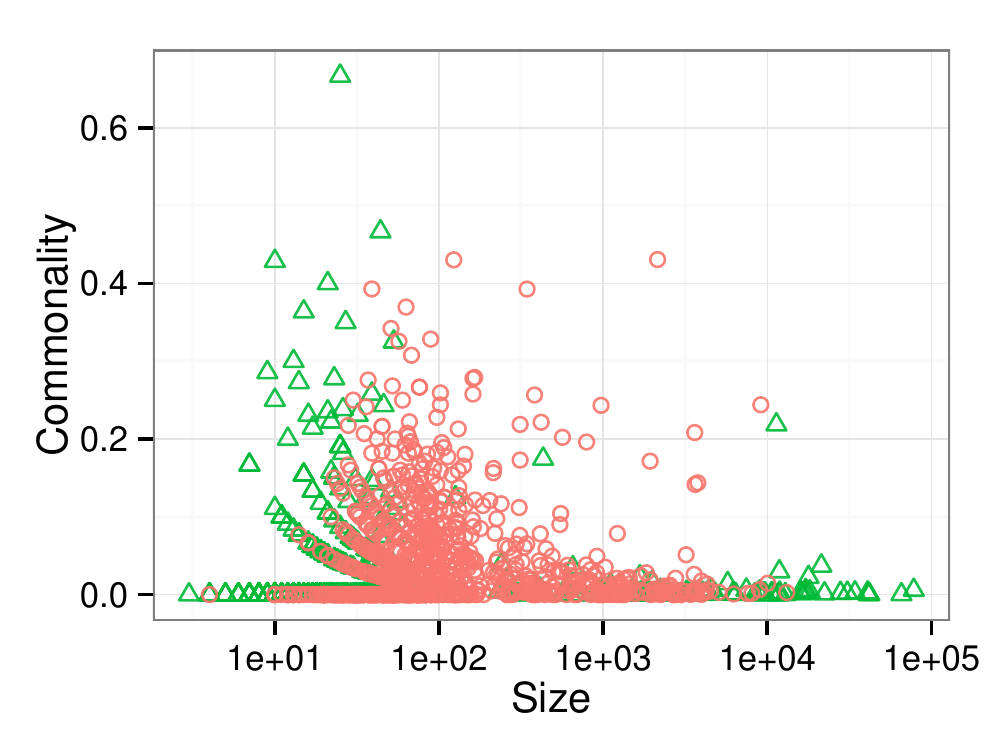}
  \caption{\label{fig:KDD-size-commonality}  }
\end{subfigure}
  \caption{\label{fig:KDD} (a) Scatter plot of  size
    of communities  and directional conductance in a social
    network. (b) Scatter plot of size of communities and commonality. }
\end{figure}

The directional communities detected by harvesting algorithms show high
asymmetricity.  We
investigate the asymmetricity of a community by looking at the ratio of
members that are common in both parts.  We define \emph{Commonality}
of a $ADC$ as the Jaccard similarity coefficient of the two parts 
(the ratio of the number of common nodes to the total
number of nodes in the union of the two parts).
Figure~\ref{fig:KDD-size-commonality} shows that most detected communities
are low in the commonality except some of small communities.  Further
inspection showed that the asymmetric communities are mostly formed by
few popular terminal nodes (authorities) and large number of source
nodes (normal users).  This observation highlights the need of
considering asymmetric directional communities in social networks.

\section{Conclusions and Discussion}
\label{sec-7}

In this paper, we found that integrating 
two different roles of nodes is critical in 
characterizing a community of a directed
network.  We introduced a new notion of community,
\emph{directional communities}, that is capable of discerning the two
different roles of each node in a community.

We proposed two regularized SVD based \emph{harvesting algorithms}
that sequentially identify directional communities. 
The regularized SVD method is linearly scalable to the
number of edges in the network.
The $L_0$-harvesting algorithm showed good
performance even in networks having low average-degree. Meanwhile the
$EN$-harvesting algorithm excelled in detecting
relatively small and dense communities.

We believe directional communities enable genuine analysis on
community structures in highly asymmetric directed networks of real
applications. Also, the simplicity of harvesting algorithms, only
relying on matrix multiplication and thresholding of vectors, leads to
further improvement of the algorithm through parallel and distributed
computing.

\section*{Acknowledgements}
This research is partially supported by NSF grant (DMS-1007060 and
DMS-130845). The authors would like to thank Dr.~Srinivasan
Parthasarathy, Dr.~Yoonkyung Lee, and Dr.~Hanson Wang for helpful discussion.

\appendix

\section{Algorithm for Finding Directional Components}
\label{app-directional-components}

This section presents a simple searching algorithm for finding directional components that is introduced in Section~\ref{sec:directional-component}.

\begin{algorithm}
 \caption{Decompose a directed graph into directional components}
\label{al:directional-components}
 \begin{algorithmic}[1]
 \Require ${\cal G}=({\cal V},{\cal E}), k = 1$
\Repeat
\State $S = \emptyset, T = \emptyset$
 \State Add a source node $s \in {\cal V}$ whose out-degree is
 non-zero in ${\cal E}$ into the set $S$. 
\Repeat
\State Find $E_t = \{e \in {\cal E} | v^s(e) \in S\}$.
\State $ T \leftarrow T \cup \{v^{t}(e) | e \in E_t\}$
\State ${\cal E} \leftarrow {\cal E} - E_t$
\State Find $E_s = \{e \in {\cal E} | v^t(e) \in T\}$
\State $ S \leftarrow S \cup \{ v^{s}(e) | e \in E_s\}$
\State ${\cal E} \leftarrow {\cal E} - E_s$
\Until{ $E_t = E_s = \emptyset$ }
\State $DC_k \leftarrow (S, T)$
\State $k \leftarrow k + 1$
\Until{${\cal E}$ is empty. }
\State \Return {$DC_1, \ldots, DC_k$}
 \end{algorithmic}
\end{algorithm}

\section{PROOFS FOR REGULARIZED SVD ALGORITHMS}
\label{app-a}

We presents the proofs of propositions of the main article.

Even though we assumed zero-one weights of edges in the main article, 
following proofs are also true for non-negative weights of edges. 
We denote the principal singular value of a matrix $X$ by $\sigma_1(X)$. 

\vskip 0.2in
\noindent  {\bf Proof of Proposition~\ref{prop:phi}}

\begin{proof}
For notational convenience, here, $\mathbf{u}(v_{i})$ is shortened to $u_i$ and $\mathbf{v}(v_{j})$ is 
shortened to $v_j$. 
We show that $ \phi_{\eta = 0}(C(S,T)) = \sum_{i,j}
W_{ij}\left(\frac{u_{i}}{\sqrt{d_{r,i}}}-\frac{v_{j}}{\sqrt{d_{c,j}}}\right)^{2}$
and at the same time
$\sum_{i,j}
W_{ij}\left(\frac{u_{i}}{\sqrt{d_{r,i}}}-\frac{v_{j}}{\sqrt{d_{c,j}}}\right)^{2}
= 1 - 2 \mathbf{u}^t Q \mathbf{v}$

\begin{align}
 \sum_{i,j}
W_{ij} \left(\frac{u_{i}}{\sqrt{d_{r,i}}}-\frac{v_{j}}{\sqrt{d_{c,j}}}\right)^{2}\notag 
  = {}&  \sum_{i \in S, j \in \bar{T}} W_{ij} \left(
 \frac{1}{\sqrt{\text{Vol}(S) + \text{Vol}(T)}}\right)^{2} + \\
{}& 
\sum_{i \in \bar{S}, j \in T} W_{ij} \left(
 \frac{1}{\sqrt{\text{Vol}(S) + \text{Vol}(T)}} \right)^{2}  \notag \\
 = {} &  \frac{\mbox{d-Cut}(C(S,T), C(\bar{S},\bar{T}))}{\text{Vol}(S) + \text{Vol}(T)} \notag
\end{align}
and on the other hand,
\begin{align}
 \sum_{i,j}
W_{ij}\left(\frac{u_{i}}{\sqrt{d_{r,i}}}-\frac{v_{j}}{\sqrt{d_{c,j}}}\right)^{2}\notag   = {}
 &  \sum_{i,j}
W_{ij}(\frac{u_{i}^{2}}{d_{r,i}}+\frac{v_{j}^{2}}{d_{c,j}} -\frac{2
  u_i v_j }{\sqrt{d_{r,i}d_{c,j}}}) \notag \\
  = {}&  \sum_i  u_i^2 + \sum_j v_j^2 -2
 \sum_{i,j} W_{ij} \frac{u_i v_j }{\sqrt{d_{r,i} d_{c,j}}}   \notag \\
 = {}& \mathbf{u}^{t} \mathbf{u} + \mathbf{v}^t \mathbf{v}-2
 \mathbf{u}^t Q \mathbf{v}  \notag \\
 = {}& 1 -2 \mathbf{u}^t Q \mathbf{v}  \notag 
\end{align}
The last equality holds by definition $ \text{Vol}(S) = \sum_{i \in S} d_{r,i} ,
\text{Vol}(T) = \sum_{j \in T} d_{c,j}$. 

\end{proof}

\vskip 0.2in
\noindent  {\bf Proof of Proposition~\ref{prop:d-comp}}
\begin{proof}
  Notice that we can modify the adjacency matrix $W$ by removing zero
  rows and zero columns without loss of generality.  The modified
  matrix is denoted by $E \in \mathbb{R}^{|{\cal S}|\times |{\cal
      T}|}$, where \({\cal S}\) is the set of source nodes whose
  out-degree is non-zero and \({\cal T}\) is the set of terminal nodes
  whose in-degree is non-zero. The singular vectors of $W$ can be
  obtained by padding zeros back to the singular vectors of $E$.

We introduce a bipartite graph expression of a directed graph that is
also considered in \cite{Zhou:2005, guimera2007module}.
The bipartite graph converted from a directed graph \({\cal G =
  (V,E)}\) is \({\cal G}_{B} = ({\cal S},{\cal T},{\cal L}) \), where
${\cal S}$ is the set of source nodes and ${\cal T}$ is the set of
terminal nodes and 
\({\cal L}\) is the set of undirected edges, $\{(v^s(e),v^t(e)), e \in
{\cal E}\}$. The adjacency matrix of ${\cal G}_{B}$, $A$, is 
\[
A =
 \left[ \begin{array}{ccc}
0 & E \\
E^t & 0 \end{array} \right].
\]

This proof has two steps, 
\begin{enumerate}
\item Show that a directional component of ${\cal G}$ is equivalent to
  a connected component of ${\cal G}_{B}$.
\item Use the relationship between the spectrum of Laplacian and connected
  components in an undirected graph to show the proposition. 
\end{enumerate}

First, let us show that a directional component (\(DC\)) in ${\cal G}$
is a connected component (\(C\)) in ${\cal G}_{B}$ by examining the
connectivity and maximality conditions:
\begin{itemize}
\item Connectivity: First, any \((s,t), s \in {\cal{S}}, t \in {\cal
    T}\) are connected in \({\cal G}_{B}\) by the D-connectivity, $s
  \leadsto t$.  Second, any \((s_1,s_2), s_1 \in {\cal{S}}, s_2 \in
  {\cal S}\) are connected in \({\cal G}_{B}\) since there exists a
  common terminal node $t \in {\cal T}$ such that $s_1 \leadsto t$ and
  $s_2 \leadsto t$.  And any \((t_1,t_2), t_1 \in {\cal T}, t_2 \in
  {\cal T}\) are connected in \({\cal G}_{B}\) for the existence of a
  common source node.
\item Maximality: Assume that there exists a node that is connected to
  \(C\) but not a member of \(DC\).  Then there should be a directed
  edge starting from the node or ended at the node in \({\cal G}\). In
  either case the node is a member of \(DC\).  It contradicts to the
  maximality of \(DC\). Thus there is no such node.
\end{itemize}
Similarly, we show that 
a connected component \(C\) in ${\cal G}_{B}$ is a directional component
\(DC\)  in \({\cal G}\).
Any pair of nodes \( (s,t), s \in {\cal S}, t \in {\cal T}
\) is D-connected in \({\cal G}\) by the connectivity in \({\cal G}_{B}\). 
Maximality for a directional component is again obtained by using the
maximality of \(C\). 

For the second step, we apply the proposition 4 of
\cite{VonLuxburg:2007p47} that shows us the equivalence between the
number of connected components of an undirected graph and the
multiplicity of the zero eigenvalue of graph Laplacian matrix of the
undirected graph.  Let \(L_{sym} \) be a normalized graph Laplacian of
$A$, which is defined by
\[
L_{sym} = I - Q_{A},
\]
where, 
\begin{align}
\label{eq:Qa}
Q_{A} & = D_{A}^{-\frac{1}{2}} A D_{A}^{-\frac{1}{2}} \notag \\
& =  \left[ \begin{array}{ccc}
0 & Q \\
Q^t & 0 \end{array} \right]
\end{align}
and \(D_A\) is the diagonal matrix of the row sums of \(A\) and it is equal to
\[
D_A =  \left[ \begin{array}{ccc}
D_r & 0 \\
0 & D_c \end{array} \right].
\]

The proposition 4 of \cite{VonLuxburg:2007p47} says that the
multiplicity \(K\) of the eigenvalue zero of \(L_{sym}\) is equal to
the number of connected components in the undirected graph
corresponding to \(A\) and the eigenspace of zero is spanned by the
vectors \(\{D_{A}^{\frac{1}{2}} \mathbf{1}_{C_k}, k = 1, \dots ,K
\}\), where \(\mathbf{1}_{C_k} \) is the indicator vector for \(k\)th
connected component.

By the definition of \(L_{sym}\), if \(\lambda\) is an eigenvalue of
\(L_{sym}\) then \(1-\lambda\) is an eigenvalue of \(Q_{A}\).  It
follows that the eigenvalue zero of \(L_{sym}\) corresponds to the
eigenvalue one of \(Q_{A}\).  In fact, one is the principal eigenvalue
of \(Q_{A}\) because the eigenvalue zero is the smallest eigenvalue of
\(L_{sym}\) which is a non-negative definite matrix.

By the standard result of the eigenvalues of \(Q_{A}\) and the
singular values of \(Q\)~\citep[see][chap. 3]{horn1994topics}, the
principal singular value of \(Q\) is the principal eigenvalue of
\(Q_A\), which is one.  A vector \(D_{A}^{\frac{1}{2}}
\mathbf{1}_{C_k} \) can be broken into two vectors
\(D_r^{\frac{1}{2}}\mathbf{1}_{S_k} \in \mathbb{R}^{|{\cal{S}}|},
D_c^{\frac{1}{2} }\mathbf{1}_{T_k} \in \mathbb{R}^{|{\cal T}|}\),
where \(D_r^{\frac{1}{2}}\mathbf{1}_{S_k} \) is the first
\(|{\cal{S}}|\) entries of \(D_{A}^{\frac{1}{2}} \mathbf{1}_{C_k} \)
and \(D_c^{\frac{1}{2}}\mathbf{1}_{T_k} \) is the last \(|{\cal T}|\)
entries of \(D_{A}^{\frac{1}{2}} \mathbf{1}_{C_k} \).  By
\eqref{eq:Qa}, the two vectors satisfy
\[
\begin{cases}
D_r^{\frac{1}{2}}\mathbf{1}_{S_k}& = Q D_c^{\frac{1}{2}} \mathbf{1}_{T_k} \\
D_c^{\frac{1}{2}}\mathbf{1}_{T_k} & = Q^{t} D_r^{\frac{1}{2}}\mathbf{1}_{S_k},
\end{cases}
\]
as one can find in \cite{Dhillon:2001}.  \(
\{D_r^{\frac{1}{2}}\mathbf{1}_{S_k}, k = 1, \dots, K\} \) is a set of
orthogonal vectors since \(S_k\)'s are exclusive.  The same argument
holds for \( \{D_c^{\frac{1}{2}}\mathbf{1}_{T_k}, k = 1, \dots, K\}
\).  Thus, the pairs of vectors \(
\{(D_r^{\frac{1}{2}}\mathbf{1}_{S_k},
D_c^{\frac{1}{2}}\mathbf{1}_{T_k}), k = 1, \dots, K\} \) span the
singular space of the singular value one of \(Q\).
 
\end{proof}

Using the adjacency matrix expression of a directed graph, a 
directional component  can be considered as a submatrix of a matrix. 
For a non-negative matrix $B$,
we call a submatrix of  $B$ a directional-component block if the
submatrix is corresponding to a directional component of the directed
graph generated from the weight matrix $B$. 

We introduce a corollary of Proposition~\ref{prop:d-comp}. 
This corollary is used in the proof of Theorem~\ref{thm:connectivity} later. 

\begin{corollary}
  \label{cor:submatrix}
  For any submatrix of $Q$, say $Q_s $, the largest singular value of
  $Q_{s}$ is less than or equal to one ( $\sigma_1(Q_s)\leq 1 $), 
 and the equality holds if and only if $Q_s$
  includes directional-component blocks. 
\end{corollary}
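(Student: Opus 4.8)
The plan is to transfer everything to the symmetric bipartite matrix $Q_A$ from \eqref{eq:Qa} used in the proof of Proposition~\ref{prop:d-comp} and then lean on $L_{sym} = I - Q_A \succeq 0$. A submatrix $Q_s$ is determined by a set $\mathcal{S}'$ of source indices (rows) and a set $\mathcal{T}'$ of terminal indices (columns); discarding any zero rows or columns as in the proof of Proposition~\ref{prop:d-comp}, I may assume every retained node has nonzero degree. Setting $Q_A^s = \bigl[\begin{smallmatrix} 0 & Q_s \\ Q_s^t & 0\end{smallmatrix}\bigr]$, the key observation is that $Q_A^s$ is precisely the principal submatrix of $Q_A$ indexed by the bipartite node set $\mathcal{N}' = \mathcal{S}' \cup \mathcal{T}'$, and that $\lambda_{\max}(Q_A^s) = \sigma_1(Q_s)$. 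Because $Q_A \preceq I$, evaluating the quadratic form $\tilde{y}^t Q_A \tilde{y} \le \|\tilde{y}\|_2^2$ on zero-paddings $\tilde{y}$ of vectors supported on $\mathcal{N}'$ shows $Q_A^s \preceq I$, hence $\sigma_1(Q_s) = \lambda_{\max}(Q_A^s) \le 1$, which settles the inequality.

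For the ``if'' direction, suppose $Q_s$ contains a directional-component block, i.e. some directional component $DC_k = (S_k, T_k)$ satisfies $S_k \subseteq \mathcal{S}'$ and $T_k \subseteq \mathcal{T}'$. A directional component has no edges leaving $S_k$ except into $T_k$ and none entering $T_k$ except from $S_k$, so the singular pair $D_r^{1/2}\mathbf{1}_{S_k}, D_c^{1/2}\mathbf{1}_{T_k}$ from Proposition~\ref{prop:d-comp}, restricted to $\mathcal{S}', \mathcal{T}'$ and padded with zeros, remains a left/right singular pair of $Q_s$ with singular value one. Together with the bound above this gives $\sigma_1(Q_s) = 1$.

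The real content is the ``only if'' direction, which I would handle as follows. Assuming $\sigma_1(Q_s) = 1$, the matrix $Q_A^s$ has top eigenvalue one, and since it is entrywise nonnegative, Perron--Frobenius furnishes an eigenvector $x \ge 0$, $x \neq 0$, with $Q_A^s x = x$. Let $\tilde{x}$ be its zero-padding to all bipartite nodes. On $\mathcal{N}'$ one has $(Q_A \tilde{x})_i = (Q_A^s x)_i = \tilde{x}_i$, while off $\mathcal{N}'$ nonnegativity gives $(Q_A \tilde{x})_i \ge 0 = \tilde{x}_i$, so $Q_A \tilde{x} \ge \tilde{x}$ entrywise. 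Since $\tilde{x} \ge 0$, this implies $\tilde{x}^t Q_A \tilde{x} \ge \|\tilde{x}\|_2^2$, while $Q_A \preceq I$ gives the reverse; equality then forces $\tilde{x}^t L_{sym} \tilde{x} = 0$, and positive semidefiniteness of $L_{sym}$ upgrades this to $L_{sym}\tilde{x} = 0$. Thus the zero-padded vector $\tilde{x}$ is an honest top eigenvector of $Q_A$.

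It remains to read off the support. By Proposition~4 of \cite{VonLuxburg:2007p47}, invoked as in the proof of Proposition~\ref{prop:d-comp}, the eigenvectors of $Q_A$ for eigenvalue one are spanned by $\{D_A^{1/2}\mathbf{1}_{C_k}\}$ over the connected components $C_k$ of the bipartite graph, each strictly positive on $C_k$ and zero elsewhere. Expanding $\tilde{x} = \sum_k a_k D_A^{1/2}\mathbf{1}_{C_k}$ with $\tilde{x} \ge 0$ and $\tilde{x} \neq 0$, some $a_{k_0} > 0$, so $\tilde{x}$ is strictly positive on all of $C_{k_0}$; but $\tilde{x}$ vanishes outside $\mathcal{N}'$, forcing $C_{k_0} \subseteq \mathcal{N}'$. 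As connected components of the bipartite graph are exactly directional components, $C_{k_0}$ yields a directional-component block $(S_{k_0}, T_{k_0})$ with $S_{k_0} \subseteq \mathcal{S}'$, $T_{k_0} \subseteq \mathcal{T}'$ contained in $Q_s$, as required. I expect the main obstacle to be exactly this promotion of the padded submatrix eigenvector to a top eigenvector of the full $Q_A$: the entrywise inequality $Q_A \tilde{x} \ge \tilde{x}$ combined with $L_{sym} \succeq 0$ is what makes the sandwich close, and the nonnegativity supplied by Perron--Frobenius is what it rests on.
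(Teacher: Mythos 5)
Your proof is correct, and its key mechanism differs from the paper's. The paper never symmetrizes in this corollary: it writes $Q_s = M_r^t Q M_c$ with selection matrices, observes that $\sigma_1(Q_s)$ is the value of the singular-value variational problem for $Q$ with the extra constraints $\mathbf{u}=M_r\mathbf{u}_s$, $\mathbf{v}=M_c\mathbf{v}_s$ (giving $\sigma_1(Q_s)\le\sigma_1(Q)=1$ immediately from Proposition~\ref{prop:d-comp}), and then handles equality by noting that a constrained maximizer attaining the value one is a global maximizer, hence lies in $\chi_1\cap\chi_{(M)}$, after which it runs the same support argument you run. You instead pass to the bipartite matrix $Q_A$ of \eqref{eq:Qa}, get the inequality from $L_{sym}=I-Q_A\succeq 0$ applied to zero-padded vectors (a Cauchy-interlacing-style principal-submatrix argument), and prove the ``only if'' direction by promoting a padded top eigenvector of the principal submatrix to a genuine eigenvector of $Q_A$ via the equality case of positive semidefiniteness, then invoking the von Luxburg eigenspace description. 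Both proofs rest on the same two pillars (the spectral description in Proposition~\ref{prop:d-comp} and a disjoint-support expansion at the end), so the difference is in the middle step; your version makes the ``padded maximizer is a global maximizer'' phenomenon explicit in PSD language, while the paper gets it in one line from the variational characterization. One remark: your Perron--Frobenius detour is dispensable and slightly obscures the argument. For \emph{any} unit top eigenvector $x$ of $Q_A^s$ (nonnegative or not), the zero-padding satisfies $\tilde{x}^tQ_A\tilde{x}=x^tQ_A^sx=\|\tilde{x}\|_2^2$ directly, since the quadratic form of $Q_A$ on a vector supported in $\mathcal{N}'$ only sees the principal submatrix; combined with $Q_A\preceq I$ this already forces $\tilde{x}^tL_{sym}\tilde{x}=0$ and hence $Q_A\tilde{x}=\tilde{x}$. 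Nonnegativity of $x$ and the entrywise inequality $Q_A\tilde{x}\ge\tilde{x}$ are therefore not needed (and your final support argument works for sign-indefinite $\tilde{x}$ just as well: some coefficient $a_{k_0}\neq 0$ suffices, since $|a_{k_0}|D_A^{1/2}\mathbf{1}_{C_{k_0}}$ has full support on $C_{k_0}$); dropping that detour makes your route as short as the paper's.
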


\begin{proof}
First of all, we introduce a handy representation of a submatrix $Q_s \in \mathbb{R}^{k \times l}$.
A submatrix of $Q$ is a matrix formed by selecting a subset of rows
and columns of $Q$. 
We define a full-rank matrix, called a selection matrix, whose columns
have only one non-zero entry
with its value. 
% Let's define a set of matrices with zero-one entries such that the
% columns of the matrix all sum to one and the rank of the matrix is 
% equal to $\min(m,n)$. 
%We call them selection matrices. 
Then, for any submatrix $Q_s$, we can find two selection matrices $M_r \in \mathbb{R}^{m \times
  k}, M_c \in \mathbb{R}^{n \times l}$ such that
\[
Q_s = M_r^t Q M_c,
\]
according to the selected rows and columns.

The principal singular value of $Q_s$, $\sigma_1(Q_s)$,
is the solution of a optimization problem,
\begin{equation}
  \label{eq:lem-11-1}
  \max_{\mathbf{u}_s, \mathbf{v}_s } \mathbf{u}_s^t Q_s \mathbf{v}_s, \quad \|\mathbf{u}_s\|_2 = 1,  \|\mathbf{v}_s\|_2 = 1.   
\end{equation}
with $\mathbf{u}_s \in \mathbb{R}^k, \mathbf{v}_s \in \mathbb{R}^l $. 
By setting $\mathbf{u} = M_r\mathbf{u}_s, \mathbf{v} = M_c\mathbf{v}_s$, 
we can see that \eqref{eq:lem-11-1} is equivalent to 
\begin{equation}
  \label{eq:lem-11-2}
  \max_{\mathbf{u}_s, \mathbf{v}_s } \mathbf{u}^t Q \mathbf{v}, \quad \|\mathbf{u}\|_2 = 1,  \|\mathbf{v}\|_2 = 1, \mathbf{u} = M_r\mathbf{u}_s, \mathbf{v} = M_c\mathbf{v}_s
\end{equation}
by $\|M_r\mathbf{u}_s\|_2 = \|\mathbf{u}_s\|_2, \|M_c\mathbf{v}_s\|_2 = \|\mathbf{v}_s\|_2 $. 
This optimization has constraints, $\mathbf{u} = M_r\mathbf{u}_s, \mathbf{v} = M_c\mathbf{v}_s$, in addition
to the formulation of the principal singular value of $Q$. 
Thus, $\sigma_1(Q_s) \leq 1$
by Proposition~\ref{prop:d-comp}.

Proposition~\ref{prop:d-comp} also tells us that $\sigma_1(Q_s) = 1$ if and 
only if $(u,v) \in \chi_1$ at the solution of
\eqref{eq:lem-11-2}, where $\chi_1 \subset \mathbb{R}^{n+m}$ is the principal singular space of $Q$.
Thus, it is clear that $\sigma_1(Q_s) = 1$ if and 
only if $\chi_1 \cap \chi_{(M)} \neq \mathbf{0}$,
 where,
\[\chi_{(M)} =
 \mbox{span}\{\{(M_{r,i},\mathbf{0}_m)\}_{i=1,\ldots,k} \cup
 \{(\mathbf{0}_n,M_{c,i})\}_{i=1,\ldots,l} \},
\]
where $M_{r,i}$ is the $i$-th column vector of $M_r$. 

Therefore it is enough to show that $\chi_1 \cap \chi_{(M)} \neq \mathbf{0}$
 if and only if $Q_s$ includes directional component
blocks. 
We want to clarify that this statement is about the condition on
$M_r, M_c$, which is equivalent to the condition on the selected rows
and columns of $Q$ for $Q_s$.

We start to show one direction by taking an non-zero vector $(u,v) \in \chi_1
\cap \chi_{M}$.
Since $(u,v) \in \chi_1$, $(u,v)$ should have non-zero entries at the
same places of non-zero entries of 
$(\mathbf{1_{S_k}}, \mathbf{1_{T_k}})$ for some $k$.
$(u,v)$ also belongs to $\chi_{M}$, thus the span of the columns of
$M_r$ have to
include $\mathbf{1_{S_k}}$ and also the span of the columns of $M_c$
have to include $\mathbf{1_{T_k}}$. 
Therefore, we conclude that $Q_s$ includes $(S_k, T_k)$ and it is true
for any $k$. 

The other direction can be shown easily by setting $Q_s$ to include a \(k\)th directional
component block of $Q$. Then, $(D_r^{\frac{1}{2}}\mathbf{1}_{S_k},
D_c^{\frac{1}{2}}\mathbf{1}_{T_k}) \in \chi_1 \cap \chi_M$. 

\end{proof}

Now we show the solution of an optimization problem,
  \begin{equation}
      \label{eq:ideal}
      \max_{\mathbf{u} ,\mathbf{v} } \, \mathbf{u}^tQ \mathbf{v}  -\eta(\|\mathbf{u}\|_0 +
      \omega \|\mathbf{v}\|_0), \qquad \|\mathbf{u} \|_2 \leq 1,  \quad \|\mathbf{v}\|_2 \leq 1,
   \end{equation}
provides a D-connected directional community.

\vskip 0.2in
\noindent  {\bf Proof of Theorem~\ref{thm:connectivity}}

\begin{proof}
Given membership vectors $\mathbf{u}, \mathbf{v}$ and the corresponding community $C(S,T)$, notice that
$\|\mathbf{u}\|_{0} = |S|$ and $\|\mathbf{v}\|_{0} = |T|$. 
We obtain a matrix $Q(C(S,T))$ by setting the rows and columns of $Q$
that are not in $S,T$ to zero vectors. 
Then, \eqref{eq:ideal} can be written as
\begin{equation}
  \label{eq:L0-form}
\max_{S,T} \sigma_1(Q(C(S,T))) -\eta SZ_{\omega}(C(S,T))  
\end{equation}
Suppose a solution $C(S^{*}, T^{*})$ of \eqref{eq:L0-form} is not
D-connected and can be decomposed into several maximal D-connected
communities within $C(S^{*}, T^{*})$.  Then $\sigma_1(Q(C(S^{*},
T^{*})))$ is equal to the principal singular value of one of the
D-connected communities. But the size of the D-connected community is
smaller than the size of $C(S^{*}, T^{*})$. Thus 
the objective function of \eqref{eq:L0-form} can be increased by the
smaller D-connected community. 
This contradicts the
supposition that $C(S^{*}, T^{*})$ maximizes the objective function. 

Since a directional component is maximal D-connected subgraph, any
D-connected subgraph should be a subgraph of some directional
component.

We prove the second claim. 
Corollary~\ref{cor:submatrix} tells us that $\sigma_1(Q(DC_1))$  is equal
to $1$ and 
that is one of the largest among $\{\sigma_1(Q(C(S,T))) |
SZ_{\omega}(C(S,T)) \geq SZ_{\omega}(DC_1) \}$. 
Thus all $C(S,T)$ such that $SZ_{\omega}(C(S,T)) >
SZ_{\omega}(DC_1)$ can not be a solution. 
We consider the condition of $\eta$ that satisfies 
\[
1 - \eta SZ_{\omega}(DC_1)  < \sigma_{1}(Q(C(S,T))) - \eta SZ_{\omega}(C(S,T)),
\]
for some $C(S,T)$ such that $SZ_{\omega}(C(S,T)) < SZ_{\omega}(DC_1)$. 
After an arrangement of above inequality, $\eta$ should satisfy
\begin{equation}
  \label{eq:eta-bound}
\eta > \frac{1-\sigma_{1}(Q(C(S,T)))}{SZ_{\omega}(DC_1) - SZ_{\omega}(C(S,T))}.  
\end{equation}
$SZ_{\omega}(DC_1) - SZ_{\omega}(C(S,T)) > 0$ by the condition of $C(S,T)$ and
$1-\sigma_{1}(Q(C(S,T))) > 0$ by Corollary~\ref{cor:submatrix},
 thus taking minimum over the possible communities finishes the
 proof. 
\end{proof}

We provide a toy example illustrating how the regularized SVD can detect the smallest directional component. 
The left panel of Figure~\ref{fig:singular-submatrix-1} 
shows the adjacency matrix of a network with ten nodes. 
The network has two directional components with 
different sizes. 
 The parameter $\omega$ in  
 defining the size of communities is  set to 
$1.1$.
We randomly select two subsets $(S, T)$ 
of nodes to generate a submatrix $Q(C(S,T))$ from the full graph
Laplacian matrix $Q$.
Consider $S,T$ as
indexes of rows and columns of $Q$ respectively.  
For each selected $Q(C(S,T))$, we 
calculate its principal singular value $\sigma_1(Q(C(S,T)))$ 
and its size $SZ_{\omega}(C(S,T))$. 
In addition to 500 randomly chosen sub-matrices,
those two directional components are included as references. 

\begin{figure}[h]
\begin{subfigure}{1\textwidth}
  \centering
  \includegraphics[trim = 0px 200px 0px 100px, clip,width=.9\linewidth]{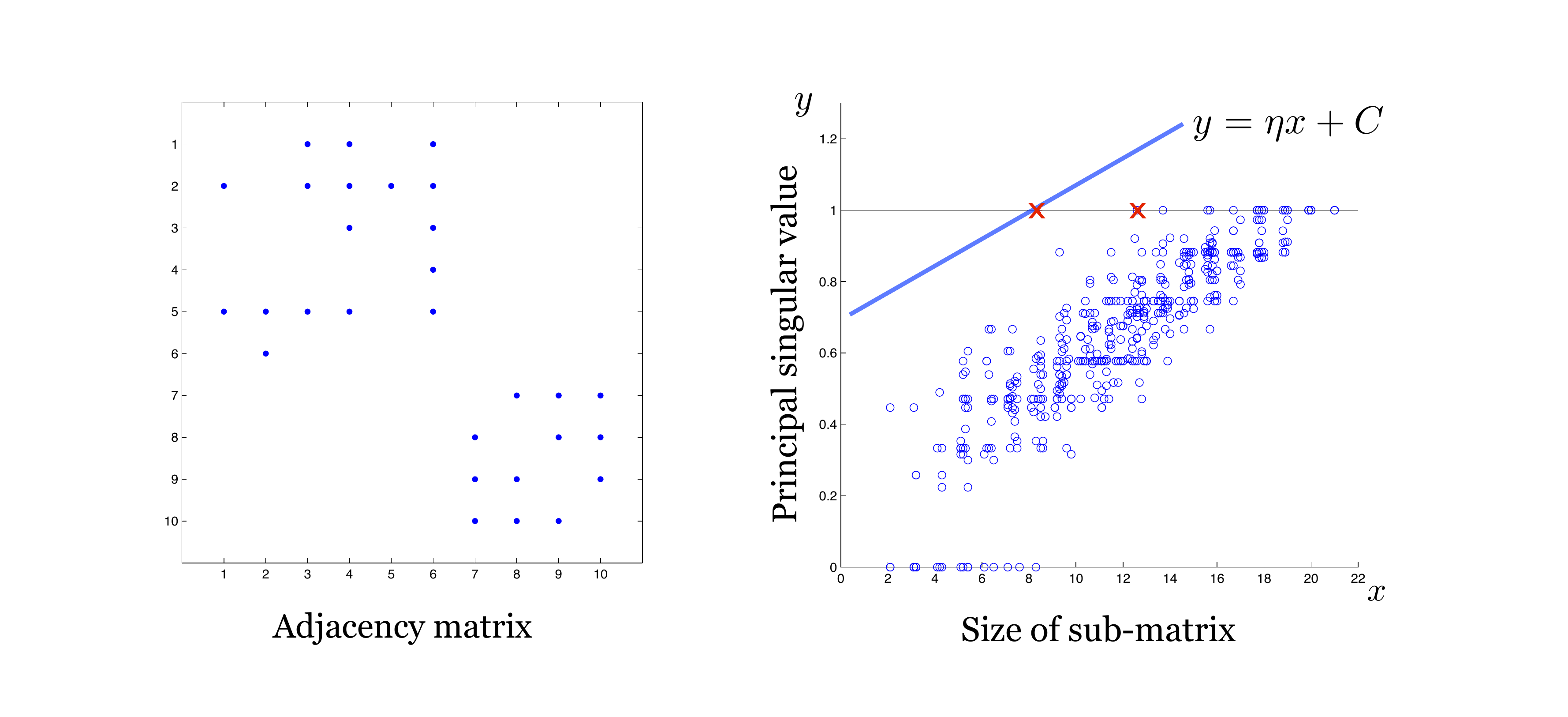}
  \caption{\label{fig:singular-submatrix-1} No external edges}
\end{subfigure}

\begin{subfigure}{1\textwidth}
  \centering
  \includegraphics[trim = 0px 200px 0px 100px, clip,width=.9\linewidth]{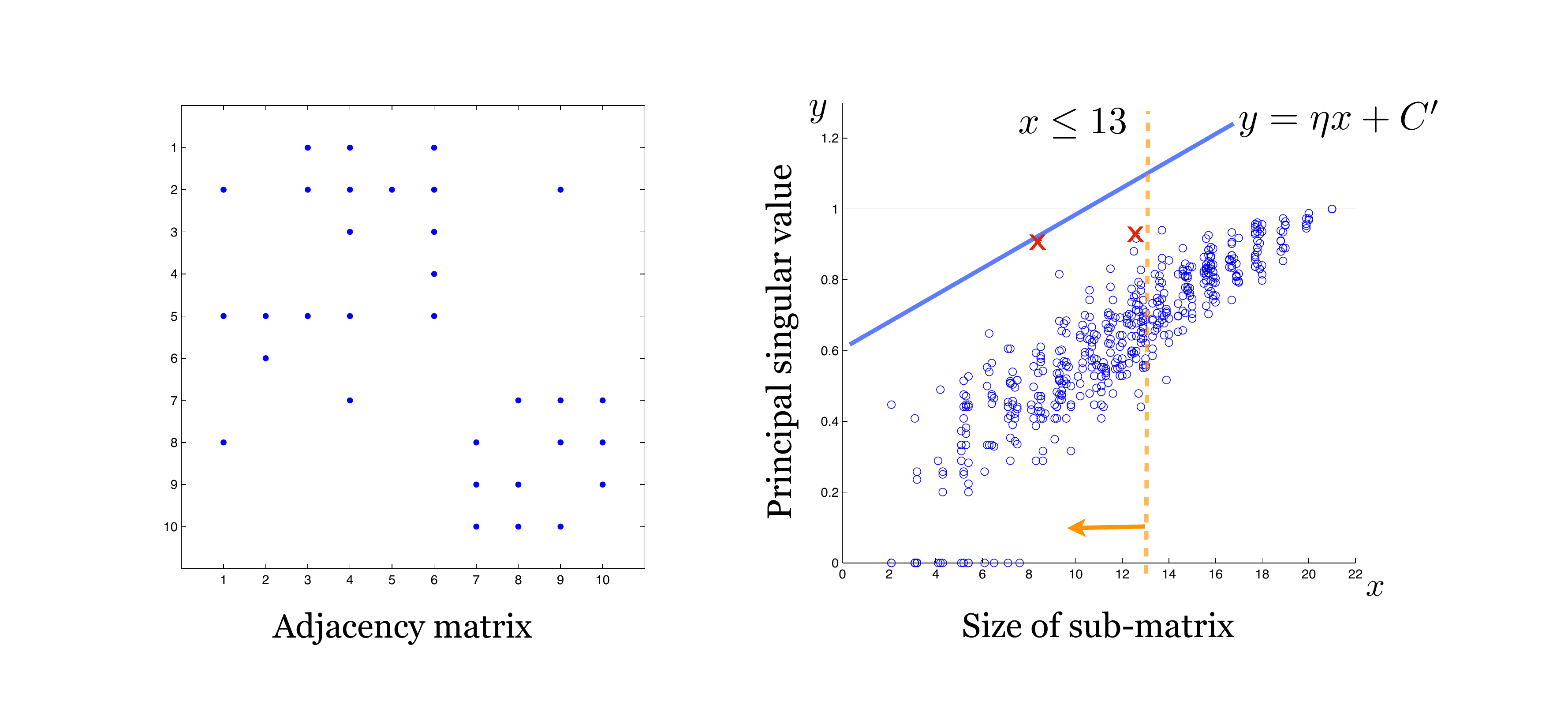}
  \caption{\label{fig:singular-submatrix-2} After adding three external edges}
\end{subfigure}
\caption{\label{fig:singular-submatrix} Left panel of (a): The adjacency
    matrix of an example network having two directional
    components. Right panel of (a):  Scatter plots of 
    $SZ_{\omega}(C(S,T))$ and $\sigma_1(Q(C(S,T)))$. $Q(C(S,T))$ is a submatrix  of the 
    graph Laplacian matrix $Q$ derived from a directed graph having two directional
    components. 
Left panel of (b): The adjacency
    matrix of the example network of
    Figure~\ref{fig:singular-submatrix-1} after adding three external
    edges. Right panel of (b):  Scatter plots of 
    $SZ_{\omega}(C(S,T))$ and $\sigma_1(Q(C(S,T)))$. $Q(C(S,T))$ is a
    submatrix  of the 
    graph Laplacian matrix $Q$ derived from the directed graph corrupted by
    external edges.}
\end{figure}

The right panel of Figure~\ref{fig:singular-submatrix-1} 
show paired values $(SZ_{\omega}(C(S,T)), \sigma_1(Q(C(S,T))))$, with {\color{blue}{o}}, 
for each sampled submatrix $Q(C(S,T))$, and  those two directional components 
are marked as {\color{red}{X}}s.
Let us denote the value of objective function in \eqref{eq:ideal} as
$C$. This figure shows that there exist a line with slope $\eta > 0$ whose
intercept $C$ is maximized at the point corresponding to the smallest
directional component as  Theorem~{3.3} describes. 
Therefore, the optimization \eqref{eq:ideal} leads to identification of the  
smaller of the two directional components in this network.
To summarize the result, both Proposition~\ref{prop:d-comp} 
and the example show that directional components, if there is any, 
can be identified sequentially by $L_0$ regularized SVD approach.

Recall that we encountered the problem that the small number of external edges 
connect small directional communities together as one large directional component. 
The root of the problem is the too strict requirement on finding exact 
directional components, the maximal set of node satisfying D-connectivity.  
The $L_0$ regularized SVD limits the number of non-zero entries of the singular vectors, 
so it may find a community that is smaller and
almost separated from other communities.

To illustrate the advantage of the regularized approach, 
we add three external edges in the example. 
As a result, those two directional components merge together as one,
as shown in the left  panel of Figure~\ref{fig:singular-submatrix-2}.
The right panel of Figure~\ref{fig:singular-submatrix-2} 
plots paired values $(SZ_{\omega}(S, T), \sigma_1(Q(C(S,T))))$
of the same 500 pairs of $(S,T)$
shown in Figure~\ref{fig:singular-submatrix-1}.
The principal singular values of two true directional components 
({\color{red} X} marks) have decreased because of the added external edges, 
but the line with the same slope is still capable of identifying the
original directional component.
In addition to the argument of approximately minimizing penalized directional conductance, this example shows that the regularized SVD may capture smaller communities that are embedded in a directional component. 

\vskip 0.2in
\noindent  {\bf Proof of Proposition~\ref{prop:optim-L0}}

\begin{proof}
For a fixed number of non-zero entries $\|\mathbf{u}\|_0 = l$, $\max_{\|\mathbf{u}\|_2 \leq 1} \mathbf{u}^t\mathbf{z} $ is obtained when 
$
\mathbf{u} = {\mathbf{z}^h_l}/ {\|\mathbf{z}^h_l\|_2}
$.
Thus the objective function (\ref{def:maxL0}) can be written as 
\[
F(l) = \|\mathbf{z}^h_l\|_2- \rho \, l.
\]
Now we maximize $F(l)$ over $l$. 
Notice that $ \|\mathbf{z}^h_l\|_2$ increases monotonically as $l$ increases. The value of
$F(l)$ keeps increasing until 
\[
\sqrt{\|\mathbf{z}^h_l\|_2^2 + |z|_{(l+1)}^2} - \|\mathbf{z}^h_l\|_2 \leq \rho, 
\]
which is equivalent to \eqref{eq:optim-L0}.
After $l$ goes beyond this point, 
 $F(l) $ starts to decrease
and keeps decreasing because $|z|^2_{(l)}$ decreases and $\|\mathbf{z}^h_l\|_2$ increases.
Therefore, the solution to \eqref{def:maxL0} is obtained at the minimum $l$ that satisfies \eqref{eq:optim-L0}. 
\end{proof}

We begin to show an optimization problem, 
\begin{equation}
\label{eq:EN-u}
\max_{\mathbf{u} } \mathbf{u}^{t} \mathbf{z}, \quad
  \mbox{subject to } \quad (1-\alpha) \|\mathbf{u} \|^2_{2} + \alpha \|\mathbf{u} \|_{1} \leq c_1, 
%\max_{u} \quad u^{T}Qv, \quad \mbox{subject to } (1-\alpha) \|u\|_{2}^{2} + \alpha \|u\|_{1} \leq c_{1}
\end{equation}
can be solved by soft-thresholding. First, we introduce a definition.

\begin{definition}{3.5.}
For a vector $\mathbf{z}=(z_1, \ldots, z_n)' \in \mathbb{R}^n$, recall the $l$-th largest 
absolute value of $\mathbf{z}$ was defined as ${|z|}_{(l)}$.
We define
\begin{align}
\label{eq:23}
G_{\mathbf{z}}(x) &=  \frac{1}{4x^2} \sum_{i = 1}^{k(x)-1} (|z|_{(i)}- x)^2+ 
\frac{1}{2x} \sum_{i = 1}^{k(x)-1} (|z|_{(i)} - x) 
\end{align}
where $ k(x) \in \{ 1, \dots , n+1 \} $ satisfies $ |z|_{(k(x))} \leq
x < |z|_{(k(x)-1)} $ and define $|z|_{(n+1)} = 0$. 
\end{definition}

\vskip 0.2in
\noindent  {\bf Proof of Proposition~\ref{thm:elastic}}

\begin{proof}
The first part of this proof
resembles the proof of Lemma 2.2 of \cite{Witten:2009}. 
Express the objective function and the constraints by using a Lagrangian multiplier,
\begin{align}
\label{eq:20}
\min_{\mathbf{u},\lambda}  & -\mathbf{u}^{t}\mathbf{z} + \lambda ( (1-\alpha) \|\mathbf{u}\|_{2}^{2} + \alpha \|\mathbf{u}\|_{1}).
\end{align}
Then, differentiate the objective function in \eqref{eq:20} by
$\mathbf{u}$ and set it to zero, 
\begin{align}
 -\mathbf{z} + \lambda(2(1-\alpha) \mathbf{u} + \alpha \Gamma) = 0, \notag
\end{align}
where \(\Gamma_i = \mbox{sign}(u_i)\) if \(u_i \not= 0\), otherwise 
\(\Gamma_i \in [-1,1] \). The Karush-Kuhn-Tucker (KKT) conditions require \(\lambda ( (1-\alpha) \|\mathbf{u}\|_{2}^{2} + \alpha \|\mathbf{u}\|_{1}-c_1) = 0 \).
If \(\lambda > 0\),  the solution is 
\begin{align}
\hat{\mathbf{u}} = \frac{S(\mathbf{z}, \lambda \alpha)}{2\lambda (1-\alpha)}. \notag
\end{align}
\(\lambda\) can be zero, if the solution is not on the boundary of the constraint. But it does not happen unless \(\mathbf{z}\) is a zero vector. 
 Thus, \(\lambda > 0\) is chosen so that \(\hat{\mathbf{u}}\) satisfies the KKT condition. 
\begin{align}
\label{eq:21}
&( 1-\alpha ) \left\|\frac{S(\mathbf{z}, \lambda \alpha)}{2\lambda ( 1-\alpha )} \right\|_2^2 + \alpha \left\| \frac{S(\mathbf{z}, \lambda \alpha)}{2\lambda ( 1-\alpha )} \right\|_1 = c_1 \notag \\
\Rightarrow & \frac{1}{(2 \lambda)^2 ( 1-\alpha )} \sum_{i = 1}^{k-1} (|{z}|_{(i)}- \lambda \alpha)^2 + \frac{\alpha}{2 \lambda ( 1-\alpha )} \sum_{i = 1}^{k-1} (|{z}|_{(i)} - \lambda \alpha) = c_1 
\end{align}
where \(k\) satisfies \(|{z}|_{(k)} \leq \lambda\alpha <  |{z}|_{(k-1)} \). 
Denote the threshold level \( d = \lambda \alpha \), then \eqref{eq:21} becomes
\begin{align}
\label{eq:22}
 \left( \frac{1}{4d^2} \sum_{i = 1}^{k-1} (|{z}|_{(i)}- d)^2 + \frac{1}{2d} \sum_{i = 1}^{k-1} (|{z}|_{(i)} - d)  \right) = c_1 \frac{1-\alpha}{\alpha^2}, 
\end{align}
where \(k\) satisfies \(|{z}|_{(k)} \leq d <  |{z}|_{(k-1)} \).
Using Lemma~\ref{lem:thresh}, one can determine the threshold level \(d\) of \eqref{eq:22} by setting \( \mathbf{z} \) and \(c = c_1 \frac{1-\alpha}{\alpha^2} \). 
Even though the value of \(\lambda\) is not required for the solution, 
we present it for the record. 
\begin{align}
\lambda = \frac{1}{\alpha} \left(\frac{\sum_{i = 1}^{\hat{k}} |{z}|_{(i)}^2}{4(c_1\frac{1-\alpha}{\alpha^2}) + \hat{k}} \right)^{\frac{1}{2}}. \notag
\end{align}
\end{proof}

Now, we show how to obtain the threshold level in Proposition~{3.6}.

\begin{lemma}
\label{lem:thresh}
The solution of the equation $G_{\mathbf{z}}(d) = c$ for given $c > 0$  is 
\begin{align}\label{eq:k}
d = \left(\frac{\sum_{i = 1}^{\hat{k}} |z|_{(i)}^2}{4c + \hat{k}}\right)^{\frac{1}{2}},
\end{align}
where $ \hat{k} $ is a positive integer in $\{1,2, \dots, n \}$ that satisfies 
$G_{\mathbf{z}}(|z|_{(\hat{k})}) \leq c,  G_{\mathbf{z}}(|z|_{(\hat{k}+1)}) > c$. 
\end{lemma}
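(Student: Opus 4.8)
The plan is to exploit a hidden simplification of $G_{\mathbf{z}}$: on any interval where the number of surviving entries is constant, the function collapses to a single inverse-square term. Concretely, I would first fix an integer $m$ with $0 \le m \le n$ and restrict attention to $x \in [\,|z|_{(m+1)},|z|_{(m)})$, on which $k(x) = m+1$ so that the two sums in \eqref{eq:23} run over exactly $i = 1,\dots,m$. Writing $A_m = \sum_{i=1}^m |z|_{(i)}$ and $B_m = \sum_{i=1}^m |z|_{(i)}^2$ and expanding the squares, the first sum equals $B_m - 2xA_m + mx^2$ and the second equals $A_m - mx$. Substituting into $G_{\mathbf{z}}(x)$, the terms involving $A_m$ cancel exactly, leaving
\[
G_{\mathbf{z}}(x) = \frac{B_m}{4x^2} - \frac{m}{4}, \qquad |z|_{(m+1)} \le x < |z|_{(m)}.
\]

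This single identity does most of the work. First, for $m \ge 1$ the right-hand side is strictly decreasing in $x$, and for $m = 0$ it is identically zero; so $G_{\mathbf{z}}$ is strictly decreasing on $(0,|z|_{(1)}]$ and vanishes for $x \ge |z|_{(1)}$. Second, I would check that $G_{\mathbf{z}}$ is continuous across each breakpoint $x = |z|_{(m)}$ by comparing the two adjacent formulas: their difference there is $\frac{B_m - B_{m-1}}{4|z|_{(m)}^2} - \frac14 = \frac{|z|_{(m)}^2}{4|z|_{(m)}^2} - \frac14 = 0$. Together with $\lim_{x \to 0^+} G_{\mathbf{z}}(x) = +\infty$, strict monotonicity and the intermediate value theorem guarantee that $G_{\mathbf{z}}(d) = c$ has a unique solution $d \in (0,|z|_{(1)}]$ for every $c > 0$.

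To produce the closed form, I would use the defining inequalities for $\hat{k}$ to locate $d$. Since $G_{\mathbf{z}}$ is decreasing, the conditions $G_{\mathbf{z}}(|z|_{(\hat{k})}) \le c$ and $G_{\mathbf{z}}(|z|_{(\hat{k}+1)}) > c$ force $|z|_{(\hat{k}+1)} < d \le |z|_{(\hat{k})}$, i.e. $d$ lies in the interval on which exactly $\hat{k}$ entries survive (the right endpoint being covered by the continuity just established). On that interval the simplified formula reads $G_{\mathbf{z}}(d) = B_{\hat{k}}/(4d^2) - \hat{k}/4$, and solving $B_{\hat{k}}/(4d^2) - \hat{k}/4 = c$ for $d$ gives $d = \big(\sum_{i=1}^{\hat{k}} |z|_{(i)}^2 / (4c + \hat{k})\big)^{1/2}$, which is exactly \eqref{eq:k}.

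I expect the only genuine obstacle to be the algebraic cancellation in the first step together with the continuity check at the knots; once the inverse-square form is in hand, monotonicity, uniqueness, and the final inversion are routine. A minor point to handle carefully is the boundary case $d = |z|_{(\hat{k})}$ arising when $G_{\mathbf{z}}(|z|_{(\hat{k})}) = c$, where one must invoke continuity to apply the $m = \hat{k}$ formula at the endpoint; the identity $B_{\hat{k}} - B_{\hat{k}-1} = |z|_{(\hat{k})}^2$ makes this transition seamless.
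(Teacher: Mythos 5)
Your proof is correct, and its overall skeleton matches the paper's: establish that $G_{\mathbf{z}}$ is decreasing, use the defining inequalities of $\hat{k}$ to bracket $d$ in $(|z|_{(\hat{k}+1)},|z|_{(\hat{k})}]$, and then solve a quadratic in $d$ on that interval. The difference is in how the key properties are obtained. The paper proves monotonicity by a direct chain of inequalities (comparing the sums at $d_1>d_2$ term by term, using $k(d_2)\le k(d_1)$), never simplifies $G_{\mathbf{z}}$, and only reveals the inverse-square structure implicitly at the very end when it invokes ``the quadratic formula.'' You instead derive the closed form
\[
G_{\mathbf{z}}(x)=\frac{B_m}{4x^2}-\frac{m}{4},\qquad |z|_{(m+1)}\le x<|z|_{(m)},
\]
up front, after which monotonicity on each piece, continuity at the knots (via $B_m-B_{m-1}=|z|_{(m)}^2$), existence and uniqueness of the root by the intermediate value theorem, and the final inversion are all immediate. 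This buys you two things the paper's write-up glosses over: an explicit justification that $c$ lies in the range of $G_{\mathbf{z}}$ (the paper asserts this without the continuity argument that makes it true), and a clean treatment of the boundary case $G_{\mathbf{z}}(|z|_{(\hat{k})})=c$, where the formula for the interval $[\,|z|_{(\hat{k}+1)},|z|_{(\hat{k})})$ must be extended to the right endpoint by continuity before solving for $d$. So your argument is not a different theorem-level strategy, but it is a tighter and more self-contained execution of the same plan.
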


\begin{proof}
  For the first step, we show that \( G_{\mathbf{z}}(\cdot)\) is a
  monotone decreasing function, that is, if \( d_1 > d_2 \), then \(
  G_{\mathbf{z}}(d_1) < G_{\mathbf{z}}(d_2) \).  The first term of
  \eqref{eq:23} is monotone decreasing of \(d\) because
\begin{align}
\label{eq:24}
 \frac{1}{4d_2^2} \sum_{i = 1}^{k(d_2)-1} (|z|_{(i)}- d_2)^2 & >  \frac{1}{4d_1^2} \sum_{i = 1}^{k(d_1)-1} (|z|_{(i)}- d_2)^2 \notag \\
& >  \frac{1}{4d_1^2} \sum_{i = 1}^{k(d_1)-1} (|z|_{(i)}- d_1)^2. \notag
\end{align}
The first inequality comes from the fact that \( k(d_2) \leq k(d_1) \)
and \( d_1 > d_2\). 
The second inequality comes from \(d_1 > d_2 \). 
The second term of \eqref{eq:23} can be done in the similar way and the desired result is obtained. 

For the second step, we find the approximated solution of \(d\) from the set of \( \{ |z|_{(i)}\}_{i = 1 \ldots n}\). 
By plugging in \( |z|_{(i)} \) to \(d\) in the increasing order of
\(i\), we can find \( \hat{k} \) such that
\(G_{\mathbf{z}}(|z|_{(\hat{k})}) \leq c \), \( G_{\mathbf{z}}(|z|_{(\hat{k}+1)}) > c \) by the monotonicity of \( G_{\mathbf{z}}(\cdot) \) and being \(c\) in the range of \(G_{\mathbf{z}}(\cdot)\). 
This computation can be done efficiently by computing two cumulative
sums, \(\sum_{i}^{k}|z|_{(i)}^2 \) and \( \sum_{i}^{k}|z|_{(i)} \), in the increasing order of $k$ until $\hat{k}$ is
obtained. 
An algorithm for finding $\hat{k}$ in this Lemma is provided 
  in Algorithm~\ref{al:hatk}.

\begin{algorithm}
\caption{Find $\hat{k}$ such that
\(G_{\mathbf{z}}(|z|_{(\hat{k})}) \leq c \), \( G_{\mathbf{z}}(|z|_{(\hat{k}+1)}) > c \)}
\label{al:hatk}
\begin{algorithmic}[1]
\Require {$ (z_1 \geq z_2 \geq, \dots, \geq z_n), c > 0 $}
\State initialize $S_1 \gets 0, S_2 \gets 0, \hat{k} \gets 2 $
\For{$k = 2:n$}
\State $ S_1 \gets S_1 + z_{k-1} $
\State $ S_2 \gets S_2 + z_{k-1}^{2} $
\State $ G_k = \frac{1}{4z_{k}^2} (S_2 - 2 z_k S_1 + (k-1)z_{k}^2) + \frac{1}{2 z_k}(S_1 - (k-1)z_k) $
\If {$ G_k > c $}
\State $ \hat{k} \gets k-1$
\State \Return  {$\hat{k}$}
\EndIf
\EndFor
\If {$ G_k \leq c $}
\State $\hat{k} \gets n $
\State \Return  {$\hat{k}$}
\EndIf
\end{algorithmic}
\end{algorithm}

In the second step, we already know that \(|z|_{(\hat{k}+1)} < d \leq |z|_{(\hat{k})}\) which means \(k=\hat{k}\) fixed now.  Therefore solving a quadratic equation of \(d\), 
\begin{align}
 \frac{1}{4d^2} \sum_{i = 1}^{\hat{k}} (|z|_{(i)}- d)^2+ \frac{1}{2d} \sum_{i = 1}^{\hat{k}} (|z|_{(i)} - d)  = c \notag
\end{align}
determines the solution \(d\). 
By the quadratic formula, the solution is
\begin{align}
d = \left(\frac{\sum_{i = 1}^{\hat{k}} |z|_{(i)}^2}{4c + \hat{k}}\right)^{\frac{1}{2}}, \notag
\end{align}
knowing that $d > 0$. 
\end{proof}

\section{Algorithm Settings for Simulation}
\label{app-simulation-settings}

The true number of
communities $N_C$ is provided for DI-SIM algorithm. 
We compute the first $N_C$ singular vectors of \(Q\) and apply
the k-means algorithm with $N_C$ clusters on the left and right
singular vectors separately. 
We run k-means algorithm with $100$ random initializations 
and the one minimizing the within-cluster sums of
point-to-cluster-centroid distances is reported as the final result.
To obtain directional communities out of the two separate partitions,
we match the source part and the terminal part by the largest
common edges. 

We use Infomap implementation \texttt{Infomap-0.11.5} available at
\url{http://www.mapequation.org}. The default settings are used except for
the options for directed links (\texttt{--directed}) and two-level partition of the
network (\texttt{--two-level}). 
100 repetitions (\texttt{--num-trials}) are used to pick the best
solution.

Harvesting algorithms are initialized with \(v_0\) being the node of
largest in-degree at each harvesting.
The sparsity levels for source part and terminal part are set to the
same value, $\omega = 1$ 
and $\alpha = \beta$.
The range of them are determined so that the detected communities
are sized roughly $SZ_{\omega = 1}(C) \in (20, 400)$,  
More specifically, the grid of sparsity levels for $L_0$ penalty, $\eta$, contains \(10\)
points in \( \{ \exp(-k): k = 6+i (5/10), i = 1, \ldots, 10 \} \) and 
the grid of sparsity levels for $EN$ penalty, $\alpha$, includes \(10\) points in \( \{\frac{1}{1+\exp(k)}: k = 1 + i (3.7/10), i = 1, \ldots, 10 \} \). 
Those non-linear grids are adapted for more constant change of the
size of candidate communities. 
Early stopping parameters are set to \(s_p= 1.5\) and \( s_l=0.6\).
The harvesting algorithm continues until the number of harvested
communities reaches the true number of communities or there is no more
edge left.

\section{Communities in Cora Citation Network}
\label{app-c}

Here we present details of $ADC$s found in Cora Citation Network. 
Table~\ref{table:summary-d-comps} shows four summaries of $20$
$ADC^{L_0}$ and $ADC^{EN}$ ordered by the size. 
 \(|S|\) and \(|T|\)  are the number of source nodes and terminal
 nodes,  \(|E|\) is the number of edges in the community. 
 $\phi$  is the value of directional conductance. 

Also we provide a comparison between manual categories of papers and
communities detected by four algorithms ($L_0$-harvesting:
Table~\ref{table:cate-L0-2}, $EN$-harvesting:
Table~\ref{table:cate-EN}, DI-SIM: Table~\ref{table:cate-DI-s})
Those tables show the number of papers in manually assigned categories
for each community. 

\begin{table}[h]
\caption{\label{table:summary-d-comps}Summary of the largest \(20\) $ADC$s of
  Cora citation network. \(|S|\) is the number of source nodes and
  \(|T|\) is the number of terminal nodes and  \(|E|\) is the number
  of edges.  $\phi$ stands for directional conductance. \\}
        \centering
    \begin{subtable}{0.45\textwidth}
\begin{tabular}{c|rrrr}
  \toprule
  Order & \(|S|\) & \(|T|\)  & \(|E|\) & $\phi$  \\
  \midrule
  1   &   3266  &  2321  &  21851  &  0.1500  \\        
  2   &   2636  &  1886  &  12972  &  0.2244  \\       
  3   &   1543  &  1128  &   8342  &  0.1724  \\       
  4   &   1381  &   971  &   4690  &  0.2034  \\       
  5   &   1270  &   919  &   6037  &  0.1910  \\       
  6   &    803  &   512  &   3790  &  0.1271  \\       
  7   &    694  &   480  &   4143  &  0.3638  \\       
  8   &    577  &   485  &   2299  &  0.4906  \\       
  9   &    573  &   447  &   2018  &  0.3070  \\       
  10  &    583  &   361  &   2455  &  0.4363  \\       
  11  &    539  &   368  &   2522  &  0.3033  \\       
  12  &    503  &   403  &   1580  &  0.3588  \\       
  13  &    587  &   278  &   1750  &  0.4666  \\       
  14 &     479  &   251  &   1659  &  0.2909  \\       
  15 &     390  &   278  &   1558  &  0.3031  \\       
  16 &     368  &   233  &    938  &  0.4609  \\       
  17 &     370  &   207  &   1007  &  0.3271  \\       
  18 &     334  &   171  &    970  &  0.2416  \\       
  19 &     291  &   207  &   1119  &  0.2312  \\       
  20 &     226  &   154  &    672  &  0.4978   \\    
  \bottomrule
\end{tabular}
\caption{\label{table:summary-d-comps-L0} First \(20\) $ADC^{L_0}$. }
    \end{subtable}
\quad
    \begin{subtable}{0.45\textwidth}
      \centering
      \begin{tabular}{c|rrrr}
\toprule
      Order & \(|S|\) & \(|T|\)  & \(|E|\) & $\phi$  \\
     \midrule
     1   &   5319  &  3176  &  25428  &  0.2579  \\     
     2   &   4458  &  2756  &  17137  &  0.2437  \\     
     3   &   2309  &  1535  &  10422  &  0.2626  \\     
     4   &   2254  &  1546  &  14539  &  0.2176  \\     
     5   &    914  &   650  &   3127  &  0.3839  \\     
     6   &    752  &   488  &   3219  &  0.3605  \\     
     7   &    643  &   444  &   2522  &  0.4176  \\     
     8   &    528  &   323  &   1561  &  0.3223  \\     
     9   &    441  &   304  &   1487  &  0.3702  \\     
     10  &    453  &   276  &   1602  &  0.2505  \\     
     11  &    258  &   139  &   1504  &  0.2965  \\     
     12  &    225  &   164  &    987  &  0.3794  \\     
     13  &    245  &   116  &   1515  &  0.2070  \\     
      14 &    195  &   130  &    558  &  0.3265  \\     
      15 &    187  &   136  &    555  &  0.5642  \\     
      16 &    187  &   132  &    629  &  0.2128  \\     
      17 &    191  &   120  &    512  &  0.3706  \\     
      18 &    162  &    94  &    512  &  0.2834  \\     
      19 &    141  &   115  &    510  &  0.4501  \\     
      20 &    168  &    80  &    430  &  0.2624  \\   
 \bottomrule
\end{tabular}
\caption{\label{table:summary-d-comps-EN} First \(20\) $ADC^{EN}$. }
\end{subtable}
\end{table}

\begin{table}[ht]
  \centering
\caption{\label{table:cate-L0-2}Number of papers in the first twenty 
  approximated directional components of \(L_0\)-harvesting for each category. }
\begin{tabular}{l|rrrrrrrrrrr}
  \toprule
     &    AI  &  DSAT  &   DB  &   EC  &   HA  &  HCI  &  IR  &  Net  &   OS  &  Prog   \\
  \midrule
1    &       106  &   199  &   56  &   18  &  255  &   17  &   0  &   55  &  900 \cellcolor[gray]{0.9} &  1779 \cellcolor[gray]{0.9}  \\
2     &      2741 \cellcolor[gray]{0.9}  &    68  &   30  &    9  &    8  &   28  &  63  &   17  &   34  &    75   \\
3     &      13  &    12  &   25  &  115  &   11  &  307  &  18  &  936 \cellcolor[gray]{0.9}  &  232  &    34 \\
4     &      727  \cellcolor[gray]{0.9} &   124  &    8  &  102  &   12  &  577  \cellcolor[gray]{0.9} &  10  &    6  &   22  &    21  \\
5     &      284  &    83  &  803 \cellcolor[gray]{0.9}  &    9  &    9  &   17  &  66  &   14  &   16  &    80 \\
6     &      149  &   452  \cellcolor[gray]{0.9}&    3  &  239 \cellcolor[gray]{0.9} &   12  &    0  &   2  &    3  &    9  &     6  \\
7     &      16  &    40  &   95  &   19  &   94  &   32  &   7  &   50  &  347  &    96 \\
8     &      40  &    90  &   14  &   14  &   32  &   11  &   0  &  112  &  254  &   157 \\
9     &       283  &   184  &    0  &    8  &   29  &   19  &   1  &    3  &   32  &    37 \\
10   &       18  &    38  &   30  &   13  &    2  &   28  &   0  &   27  &  355  &   127  \\
11   &        651  &     1  &    0  &    2  &    1  &    1  &  24  &    0  &    0  &     4  \\
12   &       524  &     7  &    3  &    1  &    0  &   22  &  73  &    1  &    2  &    22 \\
13   &       543  &    23  &    1  &    3  &    2  &    1  &  45  &    0  &    9  &    31 \\
14   &        492  &     4  &   10  &    8  &    8  &    2  &   1  &    0  &    0  &     3 \\
15   &        427  &    11  &    0  &    8  &    0  &    1  &   3  &    0  &    3  &     3 \\
16   &        104  &     6  &   23  &    3  &    3  &  187  &  12  &    0  &   13  &   110   \\
17   &        21  &     9  &    3  &  307  &    3  &    8  &   2  &   20  &   40  &    22 \\
18   &        20  &    66  &    2  &    0  &  221  &    0  &   0  &   26  &    6  &    15  \\
19   &        243  &    14  &    0  &    7  &   15  &    1  &   0  &    1  &   12  &    34  \\
20   &        292  &     1  &    1  &    3  &    0  &    5  &   7  &    0  &    0  &     0   \\
 \bottomrule
\end{tabular}
\end{table}

\begin{table}[ht]
  \centering
\caption{\label{table:cate-EN}Number of papers in the first twenty 
  approximated directional components of \(EN\)-harvesting for each category.}
\begin{tabular}{l|rrrrrrrrrrr}
\toprule
         & AI   & DSAT & DB  & EC  & HA  & HCI  & IR  & Net & OS   & Prog \\
\midrule
     1  &   573  &  341  &  649  &  184  &  175  &  396  &   83  &  449  &  1136  &  1817  \\                 
     2  &  4015  &   78  &   90  &   49  &   37  &  134  &  365  &   24  &    96  &   167  \\                 
     3  & 2218  &   39  &   56  &   25  &    9  &   43  &   49  &   88  &    61  &    92  \\                 
     4  &   80  &  238  &   42  &   11  &  214  &   17  &    1  &   71  &   891  &   809  \\                 
     5 &    214  &  583  &   62  &   21  &   42  &   62  &    2  &   70  &    32  &    44  \\                 
     6 &     25  &   18  &    5  &   78  &    6  &   97  &    9  &  587  &    62  &    15 \\                 
     7  &   708  &   12  &   11  &   13  &   13  &    5  &    3  &    0  &     2  &     8  \\                 
     8  &  186  &  186  &   11  &   12  &   31  &    6  &    0  &    6  &    23  &    52  \\                 
     9  &    75  &   37  &  103  &    0  &    7  &    0  &    0  &    0  &     6  &   293 \\                 
     10 &    15  &   31  &    0  &  394  &    4  &    2  &    2  &   28  &    38  &     6 \\                 
     11 &     0  &    1  &    2  &   18  &    0  &   38  &    0  &  192  &    23  &     0  \\                 
     12 &    26  &    8  &    1  &    2  &    0  &    0  &    0  &    0  &     1  &   220   \\                 
     13 &    76  &  132  &    1  &   37  &    1  &    0  &    0  &    0  &     4  &     2  \\                 
      14 &   95  &   30  &    0  &    0  &    0  &  120  &    0  &    1  &     1  &     4  \\                 
      15 &   16  &   14  &  173  &    7  &    4  &    5  &    6  &    2  &    12  &    15  \\                 
      16 &  169  &   33  &    0  &    0  &    0  &    8  &    0  &    0  &     2  &     2  \\                 
      17 &   99  &    9  &    0  &    6  &    0  &  126  &    0  &    0  &     1  &     0   \\                 
      18 &    5  &    9  &    1  &   11  &  135  &    1  &    1  &    3  &     0  &    15 \\                 
      19 &    0  &    1  &    2  &    1  &    4  &    0  &    0  &    8  &    98  &    49  \\                 
      20 &   13  &    3  &    0  &    0  &  136  &    0  &    0  &    3  &     3  &     5  \\                 
 \bottomrule
\end{tabular}
\end{table}

\begin{table}[ht]
  \centering
\caption{\label{table:cate-DI-s} Number of papers in the source
  partition DI-SIM for each category.}
\begin{tabular}{l|rrrrrrrrrrr}
\toprule
         & AI   & DSAT & DB   & EC  & HA  & HCI & IR  & Net  & OS   & Prog \\
\midrule
     1   &   687  &  1084  &  723  &  575  &  571  &  416  &   71  &  750  &  1176  &  1933  \\
     2   &    28  &     4  &    0  &    0  &    0  &    0  &    0  &    0  &     1  &     0 \\
     3   &     4  &     0  &    0  &    0  &    0  &    0  &    0  &    0  &     0  &     0  \\
     4   &  2650  &    14  &   18  &   21  &    6  &   47  &   95  &    1  &     5  &    14 \\
     5   &   120  &   165  &   78  &   85  &  177  &  173  &   13  &  489  &  1023  &  1075 \\
     6   &   100  &    42  &    5  &   14  &   13  &   17  &    5  &   10  &    18  &    23  \\
     7   &  2509  &  1374  &  269  &  316  &  373  &  506  &  126  &  288  &   305  &   779 \\
     8   &    13  &     8  &    0  &   18  &    0  &    3  &    0  &    0  &     0  &     0   \\
     9   &  4658  &   406  &  167  &  149  &   64  &  485  &  272  &   20  &    50  &   148  \\
     10  &    15  &     7  &    1  &    3  &    3  &    4  &    0  &    3  &     2  &     0 \\
 \bottomrule
\end{tabular}
\end{table}

\section{HARVESTING ALGORITHM SETTINGS FOR SOCIAL NETWORK DATA}
\label{sec:algo-settings}

The sparsity level parameters in the
harvesting algorithms are designed to capture communities with the
size in the range of $10$ to $\num{100000}$ approximately.  The grid
of sparsity parameter \(\eta\) in \(L_0\)-harvesting is set as \( \{
\exp(-k): k = 10+i (13/50), i = 1, \ldots, 50 \} \) and the grid for
$\alpha =\beta$ in \(EN\)-harvesting is set as \(
\{\frac{1}{1+\exp(k)}; k = 5 +i (6/50), i = 1, \ldots, 50 \} \).  The
early stopping method is applied with the parameters \(s_p= 1.1\) and
\( s_l=0.8\).

\clearpage
\vskip 0.2in
\bibliography{paper_comu_detec_2012}

\end{document}